\documentclass{llncs}

\listfiles
\sloppy
\usepackage{amscd,latexsym}
\usepackage{amsmath}
\usepackage{amsfonts}
\usepackage{amssymb}
\usepackage{wasysym}
\usepackage{algorithm}
\usepackage{algorithmic}

\pagestyle{plain}
\usepackage{times}
\usepackage{paralist}
\usepackage{fullpage}

\addtolength{\abovecaptionskip}{-0.2cm}
\addtolength{\belowcaptionskip}{-0.3cm}

\newcommand{\structure}[1]{\left \langle #1 \right \rangle}
\newcommand{\set}[1]{\left \{ #1 \right \}}

\def\qed{\rule{0.4em}{1.4ex}}

\newcommand{\slopefrac}[2]{\leavevmode\kern.1em
  \raise .5ex\hbox{\the\scriptfont0 #1}\kern-.1em
  /\kern-.15em\lower .25ex\hbox{\the\scriptfont0 #2}}

\makeatletter
\begingroup \catcode `|=0 \catcode `[= 1
\catcode`]=2 \catcode `\{=12 \catcode `\}=12
\catcode`\\=12 |gdef|@xcomment#1\end{comment}[|end[comment]]
|endgroup

\def\@comment{\let\do\@makeother \dospecials\catcode`\^^M=10\def\par{}}

\def\begincomment{\@comment\@xcomment}

\makeatother

\newenvironment{comment}{\begincomment}{}

\begin{document}
\title{Faster Algorithms for Alternating Refinement Relations}
\author{Krishnendu Chatterjee\inst{1} \and Siddhesh Chaubal\inst{2} \and Pritish Kamath\inst{2}}
\institute{IST Austria (Institute of Science and Technology, Austria) \and IIT Bombay}

\maketitle

\newif
  \iflong
  \longfalse
\newif
  \ifshort
  \shorttrue

\thispagestyle{empty}

\begin{abstract}
One central issue in the formal design and analysis of reactive systems is 
the notion of \emph{refinement} that asks whether all behaviors of the 
implementation is allowed by the specification.
The local interpretation of behavior leads to the notion of \emph{simulation}.
Alternating transition systems (ATSs) provide a general model for composite 
reactive systems, and the simulation relation for  ATSs is known as alternating 
simulation.
The simulation relation for fair transition systems is called fair simulation.
In this work our main contributions are as follows: 
(1)~We present an improved algorithm for fair simulation with B\"uchi fairness
constraints; our algorithm requires  $O(n^3 \cdot m)$ time as compared to the 
previous known $O(n^6)$-time algorithm, where $n$ is the number of states and 
$m$ is the number of transitions.
(2)~We present a game based algorithm for alternating simulation that 
requires $O(m^2)$-time as compared to the previous known 
$O((n \cdot m)^2)$-time algorithm, where $n$ is the number of states and 
$m$ is the size of transition relation.
(3)~We present an iterative algorithm for alternating simulation that matches
the time complexity of the game based algorithm, but is more space efficient
than the game based algorithm.
\end{abstract}

\section{Introduction}

\noindent{\bf Simulation relation and extensions.}
One central issue in formal design and analysis of reactive systems is the 
notion of refinement relations.
The refinement relation (system $A$ refines system $A'$) intuitively means 
that every behavioral option of $A$ (the implementation) is allowed by 
$A'$ (the specification).
The \emph{local} interpretation of behavorial option in terms of successor 
states leads to refinement as \emph{simulation}~\cite{Milner71}.
The simulation relation enjoys many appealing properties, such as it has a 
denotational characterization, it has a logical characterization and 
it can be computed in polynomial time (as compared to trace containment 
which is PSPACE-complete).
While the notion of simulation was originally developed for transition 
systems~\cite{Milner71}, it has many important extensions.
Two prominent extensions are as follows: 
(a)~extension for composite systems and (b)~extension for fair transition 
systems. 

\smallskip\noindent{\bf Alternating simulation relation.}
Composite reactive systems can be viewed as multi-agent 
systems~\cite{Sha53,HF89},
where each possible step of the system corresponds to a possible move in 
a game which may involve some or all component moves. 
We model multi-agent systems as \emph{alternating transition systems} (ATSs)~\cite{AHK02}.
In general a multi-agent system consists of a set $I$ of agents, but for 
algorithmic purposes for simulation we always consider a subset 
$I' \subseteq I$ of agents against the rest, and thus we will only consider
two-agent systems (one agent is the collection $I'$ of agents, and the other 
is the collection of the rest of the agents). 
Consider the composite systems $A || B$ and $A' || B$, in environment $B$.
The relation that $A$ refines $A'$ without constraining the environment $B$ 
is expressed by generalizing the simulation relation to \emph{alternating 
simulation relation}~\cite{AHKV98}.
Alternating simulation also enjoys the appealing properties of 
denotational and logical characterization along with polynomial time 
computability.
We refer the readers to~\cite{AHKV98} for an excellent exposition of 
alternating simulation and its applications in design and analysis of 
composite reactive systems. 
We briefly discuss some applications of alternating simulation relation. 
Given a composite system with many components, the problem of refinement
of a component (i.e., a component $C$ can be replaced with its implementation
$C'$) without affecting the correctness of the composite system is an 
alternating simulation problem.
Similarly, refinement for open reactive systems is also an alternating 
simulation problem.
Finally, graph games provide the mathematical framework to analyze many
important problems in computer science, specially in relation to logic,
as there is a close connection of automata and graph games 
(see~\cite{Thomas97,GH82} for details).
Alternating simulation provides the technique for state space reduction for 
graph games, which is a pre-requisite for efficient algorithmic analysis of 
graph games.
Thus computing alternating simulation for ATSs is a core algorithmic question 
in the formal analysis 
of composite systems, as well as in the heart of efficient algorithmic 
analysis of problems related to logic in computer science.

\smallskip\noindent{\bf Fair simulation relation.}
Fair transition systems are extension of transition systems with fairness 
constraint. 
A \emph{liveness} (or weak fairness or B\"uchi fairness) constraint 
consists of a set $B$ of live states, and requires that runs of the system
visit some live state infinitely often. 
In general the fairness constraint can be a strong fairness constraint instead
of a liveness constraint.
The notion of simulation was extended to fair transition systems as 
\emph{fair simulation}~\cite{HKR97}.
It was shown in~\cite{HKR97} that fair simulation also enjoys the appealing 
properties of denotational and logical characterization, and polynomial time 
computability (see~\cite{HKR97} for many other important properties and 
discussion on fair simulation).
Again the computation of fair simulation with B\"uchi fairness constraints
is an important algorithmic problem for design and analysis of reactive systems 
with liveness requirements.

\smallskip\noindent{\bf Our contributions.} 
In this work we improve the algorithmic complexities of computing 
fair simulation with B\"uchi fairness constraints and 
alternating simulation.
In the descriptions below we will denote by $n$ the size of the state space
of systems, and by $m$ the size of the transition relation.
Our main contributions are summarized below.
\begin{compactenum}
\item \emph{Fair simulation.} 
First we extend the notion of fair simulation to alternating fair simulation
for ATSs with B\"uchi fairness constraints.
There are two natural ways of extending the definition of fair simulation 
to alternating fair simulation, and we show that both the definitions 
coincide. 
We present an algorithm to compute the alternating fair simulation relation
by a reduction to a game with parity objectives with three priorities. 
As a special case of our algorithm for fair simulation, we show that the 
fair simulation relation can be computed in $O(n^3 \cdot m)$ time, as 
compared to the previous known $O(n^6)$-time algorithm of~\cite{HKR97}.
Observe that $m$ is at most $O(n^2)$ and thus the worst case running time
of our algorithm is $O(n^5)$. 
Moreover, in many practical examples systems have constant out-degree
(for examples see~\cite{ClarkeBook})  
(i.e., $m=O(n)$), and then our algorithm requires $O(n^4)$ time.

\item \emph{Game based alternating simulation.} 
We present a game based algorithm for alternating simulation. 
Our algorithm is based on a reduction to a game with reachability objectives,
and requires $O(m^2)$ time, as compared to the previous known 
algorithm that requires $O((n \cdot m)^2)$ time~\cite{AHKV98}. 
One key step of the reduction is to construct the game graph in time 
linear in the size of the game graph. 

\item \emph{Iterative algorithm for alternating simulation.} 
We present an iterative algorithm to compute the alternating simulation 
relation.
The time complexity of the iterative algorithm matches the 
time complexity of the game based algorithm, however, the iterative algorithm 
is more space efficient.
(see paragraph on space complexity of Section~\ref{subsec:iterative} for the 
detailed comparision).
Moreover, both the game based algorithm and the iterative algorithm when 
specialized to transition systems match the best known algorithms to compute 
the simulation relation.

\end{compactenum}

We remark that the game based algorithms we obtain for alternating fair 
simulation and alternating simulation are reductions to standard two-player 
games on graphs with parity objectives (with three priorities) and reachability
objectives. 
Since such games are well-studied, standard algorithms developed for games 
can now be used for computation of refinement relations. 
Our key technical contribution is establishing the correctness of the 
efficient reductions, and showing that the game graphs can be constructed in 
linear time in the size of the game graphs. 
For the iterative algorithm we establish an alternative characterization of
alternating simulation, and present an iterative algorithm that simultaneously
prunes two relations, without explicitly constructing game graphs (thus 
saving space), to compute the relation obtained by the alternative 
characterization.

\newcommand{\wh}{\widehat}
\newcommand{\ov}{\overline}
\newcommand{\wt}{\widetilde}
\newcommand{\whw}{\wh{w}}
\newcommand{\Inf}{\mathsf{Inf}}
\newcommand{\fair}{\mathsf{fair}}
\newcommand{\fairalt}{\mathsf{fairalt}}
\newcommand{\weak}{\mathsf{weak}}
\newcommand{\strong}{\mathsf{strong}}
\newcommand{\altsim}{\mathsf{altsim}}
\newcommand{\sseq}{\langle v_0,v_1,v_2,\ldots\rangle}

\newcommand{\pat}{\omega}
\newcommand{\Pat}{\Omega}
\newcommand{\straa}{\alpha}
\newcommand{\Straa}{\mathcal{A}}
\newcommand{\strab}{\beta}
\newcommand{\Strab}{\mathcal{B}}
\newcommand{\Reach}[1]{\mathrm{Reach}(#1)}
\newcommand{\Safe}[1]{\mathrm{Safe}(#1)}
\newcommand{\Parity}{\mathrm{Parity}}
\newcommand{\ReachSafe}[1]{\mathrm{ReachSafe}(#1)}
\newcommand{\Buchi}[1]{\mathrm{Buchi}(#1)}
\newcommand{\coBuchi}[1]{\mathrm{coBuchi}(#1)}
\newcommand{\In}{{\mathsf In}}
\newcommand{\Out}{{\mathsf Out}}

\section{Definitions}\label{sec:defn}
In this section we present all the relevant definitions, and the previous best 
known results.
We present definitions of labeled transition systems (Kripke structures), 
labeled alternating transitions systems (ATS), fair simulation, 
and alternating simulation.
All the simulation relations we will define are closed under union 
(i.e., if two relations are simulation relations, then so is their 
union), and we will consider the maximum simulation relation.
We also present relevant definitions for graph games that will be later 
used for the improved results.

\begin{definition}[Labeled transition systems (TS)]
A labeled \emph{transition system (TS)} (Kripke structure) is a tuple  
$K=\structure{\Sigma, W, \whw, R, L}$, where $\Sigma$ is a finite set
of observations; $W$ is a  finite set of states and $\whw$ is the 
initial state; $R \subseteq W \times W$ is the transition relation; 
and $L: W \rightarrow \Sigma$ is the labeling function that maps each state 
to an observation.
For technical convenience we assume that for all $w\in W$ there exists
$w'\in W$ such that $(w,w') \in R$.
\end{definition}

\smallskip\noindent{\em  Runs, fairness constraint, and fair transition 
systems.}
For a TS $K$ and a state $w \in W$, a $w$-run of $K$ is an infinite sequence 
$\ov{w}=w_0, w_1, w_2, \ldots $ of states  such that 
$w_0 = w $ and $R(w_i,w_{i+1})$ for all $i \ge 0$. 
We write $\Inf(\ov{w})$ for the set of states that occur infinitely often in 
the run $\ov{w}$. 
A run of $K$ is a $\whw$-run for the initial state $\whw$.
In this work we will consider \emph{B\"uchi fairness constraints}, and 
a B\"uchi fairness constraint is specified as a set $F \subseteq W$ 
of B\"uchi states, and defines the fair set of runs, where a run $\ov{w}$ is 
\emph{fair} iff $\Inf(\ov{w}) \cap F \neq \emptyset$ (i.e., the run 
visits $F$ infinitely often).
A \emph{fair transition system} 
$\mathcal{K} = \structure{K, F}$ consists of a TS $K$ and a B\"uchi fairness constraint $F\subseteq W$ for $K$. 
We consider two TSs $K_1 = \structure{\Sigma, W_1, \whw_1, R_1, L_1}$ and 
$K_2 = \structure{\Sigma, W_2, \whw_2, R_2, L_2}$ over the same alphabet, 
and the two fair TSs $\mathcal{K}_1 = \structure{K_1, F_1}$ and 
$\mathcal{K}_2 = \structure{K_2,F_2}$.
We now define the fair simulation between $\mathcal{K}_1$ and 
$\mathcal{K}_2$~\cite{HKR97}.

\begin{definition}[Fair simulation]\label{defn:fair_sim}
A binary relation $S \subseteq W_1 \times W_2$ is a \emph{fair simulation} of $\mathcal{K}_1$ by $\mathcal{K}_2$ if the following two conditions hold
for all $(w_1,w_2) \in W_1 \times W_2$:
\begin{enumerate} 
\item If $S(w_1,w_2)$, then $L_1(w_1) = L_2(w_2)$.
\item There exists a strategy $\tau : (W_1 \times W_2)^+ \times W_1 \rightarrow W_2$ such that if $S(w_1,w_2)$ and $\ov{w} = u_0, u_1, u_2,\ldots$ 
is a fair $w_1$-run of $\mathcal{K}_1$, then the following conditions hold:
(a)~the outcome $\tau[\ov{w}] = u_0',u_1',u_2', \ldots$ is a fair $w_2$-run of 
$\mathcal{K}_2$ (where the outcome $\tau[\ov{w}]$ is defined as follows: 
for all $i\geq 0$ we have 
$u_i'=\tau((u_0,u_0'),(u_1,u_1'),\ldots, (u_{i-1},u_{i-1}'), u_i)$) ; and 
(b)~the outcome $\tau[\ov{w}]$  $S$-matches $\ov{w}$; that is, $S(u_i,u_i')$ 
for all $i \geq 0$.
We say $\tau$ is a \emph{witness} to the fair simulation $S$.
\end{enumerate}
We denote by $\preceq_{\fair}$ the maximum fair simulation relation between 
$\mathcal{K}_1$ and $\mathcal{K}_2$.
We say that the fair TS $\mathcal{K}_2$ \emph{fairly simulates} the 
fair TS $\mathcal{K}_1$ iff $(\whw_1,\whw_2) \in \preceq_{\fair}$.
\end{definition}

We have the following result for fair simulation from~\cite{HKR97} (see item~1 of
Theorem 4.2 from~\cite{HKR97}).

\begin{theorem}\label{thrm:fair}
Given two fair TSs $\mathcal{K}_1$ and $\mathcal{K}_2$, the problem of 
whether $\mathcal{K}_2$ fairly simulates $\mathcal{K}_1$ 
can be decided in time 
$O((|W_1| + |W_2|)\cdot (|R_1| + |R_2|) + (|W_1|\cdot |W_2|)^3)$.
\end{theorem}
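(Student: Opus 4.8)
The plan is to reduce the decision problem to determining the winner of a two-player game with a parity objective with three priorities, and then to appeal to a known algorithm for such games. Concretely, I would build the \emph{fair simulation game} $\mathcal{G}$ played between Spoiler (who moves in $\mathcal{K}_1$) and Duplicator (who moves in $\mathcal{K}_2$). The Spoiler positions are the pairs $(w_1,w_2)\in W_1\times W_2$ with $L_1(w_1)=L_2(w_2)$; from such a position Spoiler picks some $w_1'$ with $(w_1,w_1')\in R_1$, moving to an intermediate Duplicator position $(w_1',w_2)$; Duplicator then picks some $w_2'$ with $(w_2,w_2')\in R_2$ and $L_1(w_1')=L_2(w_2')$, moving to the Spoiler position $(w_1',w_2')$, and loses immediately if no such $w_2'$ exists. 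A play generates a $w_1$-run $u_0,u_1,\dots$ of $\mathcal{K}_1$ together with a $w_2$-run $u_0',u_1',\dots$ of $\mathcal{K}_2$, and Duplicator wins the play iff $\Inf(\langle u_0,u_1,\dots\rangle)\cap F_1\neq\emptyset$ implies $\Inf(\langle u_0',u_1',\dots\rangle)\cap F_2\neq\emptyset$. This is a one-pair Streett condition; equivalently it is a parity condition with three priorities, realized by assigning priority $2$ to Spoiler positions whose $\mathcal{K}_2$-component lies in $F_2$, priority $1$ to the remaining Spoiler positions whose $\mathcal{K}_1$-component lies in $F_1$, and priority $0$ elsewhere (including all intermediate positions), with Duplicator winning iff the largest priority seen infinitely often is even.

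The heart of the argument is the equivalence: $(w_1,w_2)\in\preceq_{\fair}$ if and only if $L_1(w_1)=L_2(w_2)$ and Duplicator has a winning strategy in $\mathcal{G}$ from $(w_1,w_2)$. For the forward direction, fix any fair simulation $S$ with witness $\tau$ and $S(w_1,w_2)$; the Duplicator strategy that answers each newly revealed $\mathcal{K}_1$-state by the move prescribed by $\tau$ on the play so far is well defined, because the $S$-matching clause guarantees a label-respecting successor at every step, and it is winning: if Spoiler's choices form a fair $w_1$-run then clause~(a) of Definition~\ref{defn:fair_sim} forces the generated $w_2$-run to be fair, and otherwise the Streett implication is vacuously satisfied. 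For the converse I would invoke memoryless determinacy of parity games to fix a memoryless winning Duplicator strategy on the whole winning region $\mathrm{Win}$, and then verify that $S=\set{(w_1,w_2)\;:\;L_1(w_1)=L_2(w_2),\ (w_1,w_2)\in\mathrm{Win}}$ is a fair simulation whose witness is read off from that memoryless strategy; this uses that $\mathrm{Win}$ is closed under the winning strategy (so every position along a consistent play stays in $\mathrm{Win}$, yielding $S$-matching and the label clause) and that every such play is won by Duplicator (yielding clause~(a)). Since all simulation relations in play are closed under union, taking the union over all fair simulations identifies $\preceq_{\fair}$ with the label-consistent part of $\mathrm{Win}$, so $\mathcal{K}_2$ fairly simulates $\mathcal{K}_1$ iff $(\whw_1,\whw_2)\in\mathrm{Win}$.

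It then remains to account for the running time. The graph of $\mathcal{G}$ has $O(|W_1|\cdot|W_2|)$ vertices (Spoiler positions and intermediate positions are both pairs from $W_1\times W_2$) and $O(|R_1|\cdot|W_2|+|W_1|\cdot|R_2|)=O((|W_1|+|W_2|)\cdot(|R_1|+|R_2|))$ edges, and it can be constructed in time linear in its size by scanning $R_1$ and $R_2$. A parity game with three priorities on $N$ vertices and $M$ edges is solvable in time $O(N\cdot M)$; since here $M\le 2N^2$ (as $|R_i|\le|W_i|^2$), this is $O(N^3)=O((|W_1|\cdot|W_2|)^3)$. Summing the construction cost and the game-solving cost gives the bound $O((|W_1|+|W_2|)\cdot(|R_1|+|R_2|)+(|W_1|\cdot|W_2|)^3)$.

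I expect the main obstacle to be the equivalence lemma, and within it the converse direction: one must extract a single global witness strategy $\tau:(W_1\times W_2)^+\times W_1\to W_2$ from the game and verify the label clause and the two fairness clauses \emph{simultaneously} along all possible fair $\mathcal{K}_1$-runs. Memoryless determinacy of parity games is exactly the tool that makes this extraction clean (the witness can in fact be taken to depend only on the current pair), whereas the size and time bookkeeping is routine once a polynomial-time algorithm for three-priority parity games is taken as given.
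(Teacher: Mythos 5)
Your reduction is the right one --- it is essentially the construction this paper itself uses for its improved algorithm (the appendix game for fair simulation with the $\frownie$ sink and the three-priority assignment, Proposition~\ref{prop:fair_sim_correctness}), and your bookkeeping is fine: the game has $O(|W_1|\cdot|W_2|)$ vertices and $O(|W_2|\cdot|R_1|+|W_1|\cdot|R_2|)$ edges, is buildable in linear time, and the $O(|V|\cdot|E|)$ bound for three-priority parity games gives the claimed complexity (in fact, before you weaken $|E|\le |V|^2$, your own accounting already yields the sharper bound of Theorem~\ref{thrm:alt_fair}). Note also that Theorem~\ref{thrm:fair} is quoted from~\cite{HKR97} rather than proved here, so the right comparison is with the paper's own game-based proof.

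There is, however, one genuine gap, and it sits exactly in the step you flag as the heart of the argument. Your claimed equivalence ``$(w_1,w_2)\in\preceq_{\fair}$ iff Duplicator wins from $(w_1,w_2)$'' is \emph{false} without a preprocessing step that first disposes of the states of $\mathcal{K}_1$ from which no fair run exists. Condition~2 of Definition~\ref{defn:fair_sim} constrains the witness $\tau$ only on \emph{fair} $w_1$-runs; on a prefix that cannot be extended to a fair run, $\tau$ is unconstrained, so your claim that ``the $S$-matching clause guarantees a label-respecting successor at every step'' does not hold there, and Duplicator may get stuck (reach the losing sink) even though $S$ is a perfectly good fair simulation. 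Concretely, take $F_1=\emptyset$: then every label-matching pair is in $\preceq_{\fair}$ vacuously, yet Spoiler can often force Duplicator into a dead end, so your forward direction fails. The paper repairs this by assuming, without loss of generality, that from every state of $\mathcal{K}_1$ a fair run exists (equivalently, by a quadratic-time B\"uchi preprocessing that identifies the other states and declares all their label-matching pairs simulated outright); only under that assumption does reaching $\frownie$ genuinely witness a failure of fair simulation, because the offending finite prefix can then be completed to a fair run of $\mathcal{K}_1$ that $\mathcal{K}_2$ cannot match. Adding this preprocessing (its cost is dominated by the stated bound) closes the gap; the rest of your argument, including the extraction of a memoryless witness in the converse direction, is sound.
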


\begin{definition}[Labeled alternating transition systems (ATS)]
A labeled \emph{alternating transitions system (ATS)} is a tuple 
$K = \structure{\Sigma,W,\whw,A_1,A_2,P_1,P_2,L,\delta}$, where
(i)~$\Sigma$ is a finite set of observations; (ii)~$W$ is a finite set of states 
with $\whw$ the initial state;
(iii)~$A_i$ is a finite set of actions for Agent $i$, for $i \in \set{1,2}$; 
(iv)~$P_i : W \rightarrow 2^{A_i} \setminus \emptyset$ assigns to every state 
$w$ in $W$ the non-empty set of actions available to Agent~$i$ at $w$, for 
$i \in \set{1,2}$; (v)~$L : W \rightarrow \Sigma$ is the labeling function that 
maps every state to an observation; and 
(vi)~$\delta : W \times A_1 \times A_2 \rightarrow W$ is the transition 
relation that given a state and the joint actions gives the next state.
\end{definition}

Observe that a TS can be considered as a special case of ATS with $A_2$ 
singleton (say $A_2=\set{\bot}$), and the transition relation $R$ of a TS is 
described by the transition relation $\delta:W \times A_1 \times \set{\bot}
\rightarrow W$ of the ATS.

\begin{definition}[Alternating simulation]
Given two ATS, $K = \structure{\Sigma,W,\whw,A_1,A_2,P_1,P_2,L,\delta}$
and $K' = \structure{\Sigma,W',\whw',A_1',A_2',P_1',P_2',L',\delta'}$
a binary relation $S \subseteq W \times W'$ is an alternating simulation from 
$\mathcal{K}$ to $\mathcal{K}'$ 
if for all states $w$ and $w'$ with $(w,w') \in S$, the following conditions hold :
\begin{enumerate}
\item $L(w) = L'(w')$
\item For every action $a \in P_1(w)$, there exists an action $a' \in P_1'(w')$ such that for every action $b' \in P_2'(w')$, 
there exists an action $b \in P_2(w)$ such that 
$(\delta(w,a,b),\delta'(w',a',b')) \in S$, i.e.,
      \begin{equation*}
       \forall (w,w') \in S \cdot \  \forall a \in P_1(w) \cdot \exists a' \in P_1'(w') \cdot \forall b' \in P_2'(w') \cdot \exists b \in P_2(w) \cdot (\delta(w,a,b),\delta'(w',a',b')) \in S
      \end{equation*}
\end{enumerate}
We denote by $\preceq_{\altsim}$ the maximum alternating simulation relation 
between $K$ and $K'$. We say that the ATS $K'$ \emph{simulates} the 
ATS $K$ iff $(\whw,\whw') \in \preceq_{\altsim}$.
\end{definition}

The following result was shown in~\cite{AHKV98} (see proof of Theorem~3 of~\cite{AHKV98}).

\begin{theorem}\label{thrm:altsim}
For two ATSs $K$ and $K'$, the alternating simulation relation $\preceq_{\altsim}$ 
can be computed in time 
$O(|W|^2\cdot |W'|^2\cdot |A_1|\cdot |A_1'|\cdot |A_2|\cdot |A_2'|)$.
\end{theorem}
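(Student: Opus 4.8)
The plan is to compute $\preceq_{\altsim}$ as the greatest fixpoint of a monotone refinement operator on binary relations, and then to bound both the number of refinement rounds and the cost per round. First I would set $S_0 = \{(w,w') \in W \times W' : L(w) = L'(w')\}$, the label-matching pairs, and define the operator sending a relation $S$ to
\[
\mathsf{Refine}(S) = \{(w,w') \in S : \forall a \in P_1(w)\ \exists a' \in P_1'(w')\ \forall b' \in P_2'(w')\ \exists b \in P_2(w)\ \ (\delta(w,a,b),\delta'(w',a',b')) \in S\}.
\]
Iterating $S_{i+1} = \mathsf{Refine}(S_i)$ produces a non-increasing chain $S_0 \supseteq S_1 \supseteq \cdots$ inside the finite set $W \times W'$, so it stabilizes after at most $|W|\cdot|W'|$ steps; write $S_\ast$ for the stable value.

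Next I would check that $S_\ast$ is exactly the maximum alternating simulation. For \emph{soundness} ($S_\ast$ is an alternating simulation): every pair in $S_\ast \subseteq S_0$ has matching labels (condition~1), and since $S_\ast = \mathsf{Refine}(S_\ast)$, the four-quantifier condition holds with all target pairs again lying in $S_\ast$ (condition~2). For \emph{maximality}: given any alternating simulation $S$, I would prove $S \subseteq S_i$ for all $i$ by induction on $i$. The base case is immediate from condition~1 of the definition. For the inductive step, suppose $S \subseteq S_i$ and $(w,w') \in S$; condition~2 of the definition of alternating simulation provides, for each $a \in P_1(w)$, a witness $a' \in P_1'(w')$ such that for every $b' \in P_2'(w')$ there is $b \in P_2(w)$ with $(\delta(w,a,b),\delta'(w',a',b')) \in S \subseteq S_i$, hence $(w,w') \in S_{i+1}$. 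Since the union of alternating simulations is again an alternating simulation (as noted before the definitions), taking this union gives $\preceq_{\altsim} \subseteq S_\ast$, and with soundness we conclude $\preceq_{\altsim} = S_\ast$.

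Finally I would account for the running time. Deciding whether a single pair $(w,w')$ survives into $\mathsf{Refine}(S_i)$ amounts to iterating over the at most $|A_1|$ actions $a$; for each, scanning the at most $|A_1'|$ candidate actions $a'$; and for each such $(a,a')$, iterating over the at most $|A_2'|$ actions $b'$ and, for each, over the at most $|A_2|$ actions $b$, testing membership $(\delta(w,a,b),\delta'(w',a',b')) \in S_i$ in constant time using a Boolean matrix representation of $S_i$. This costs $O(|A_1|\cdot|A_1'|\cdot|A_2|\cdot|A_2'|)$ per pair, hence $O(|W|\cdot|W'|\cdot|A_1|\cdot|A_1'|\cdot|A_2|\cdot|A_2'|)$ per round over all pairs, and at most $|W|\cdot|W'|$ rounds give the claimed $O(|W|^2\cdot|W'|^2\cdot|A_1|\cdot|A_1'|\cdot|A_2|\cdot|A_2'|)$ bound. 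The only mildly delicate point is the per-pair bookkeeping with the alternating $\forall\exists\forall\exists$ block; beyond that this is a textbook greatest-fixpoint argument, so I do not expect a genuine obstacle — which is precisely why the paper goes on to improve this naive bound.
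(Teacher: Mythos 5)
Your proposal is correct and is essentially the paper's own argument: the paper proves this theorem via the Basic Algorithm (Algorithm~\ref{algo:basic}, from~\cite{AHKV98}), which is exactly your greatest-fixpoint iteration starting from the label-matching pairs, with the same per-pair cost $O(|A_1|\cdot|A_1'|\cdot|A_2|\cdot|A_2'|)$ for the four-quantifier test, $O(|W|\cdot|W'|)$ pairs per round, and at most $|W|\cdot|W'|$ rounds. No gaps.
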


In the following section we will present an extension of the notion of 
fair simulation for TSs to alternating fair simulation for ATSs, and present 
improved algorithms to compute $\preceq_{\fair}$ and $\preceq_{\altsim}$.
Some of our algorithms will be based on reduction to two-player games 
on graphs. We present the required definitions below.

\smallskip\noindent{\bf Two-player Game graphs.} 
A \emph{two-player game graph} $G=((V,E),(V_1,V_2))$ consists of a 
directed graph $(V,E)$ with a set $V$ of $n$ vertices and a set $E$ of 
$m$ edges, and a partition $(V_1,V_2)$ of $V$ into two sets.
The vertices in $V_1$ are {\em player~1 vertices},
where player~1 chooses the outgoing edges; 
and the vertices in $V_2$ are {\em player~2 vertices}, 
where  player~2 (the adversary to player~1) chooses the outgoing edges.
For a vertex $u\in V$, we write $\Out(u)=\set{v\in V \mid (u,v) \in E}$ 
for the set of successor vertices of~$u$ and $\In(u)=\set{v \in V \mid (v,u) \in E}$ 
for the set of incoming edges of $u$.
We assume that every vertex has at least one out-going edge. 
i.e., $\Out(u)$ is non-empty for all vertices $u\in V$.

\noindent{\em Plays.} A game is played by two players: 
player~1 and player~2, who form an infinite path in the game graph by 
moving a token along edges.
They start by placing the token on an initial vertex, and then they
take moves indefinitely in the following way.
If the token is on a vertex in~$V_1$, then player~1 moves the token along
one of the edges going out of the vertex.
If the token is on a vertex in~$V_2$, then player~2 does likewise.
The result is an infinite path in the game graph, called a {\em play}.
We write $\Pat$ for the set of all plays.

\noindent{\em Strategies.} 
A strategy for a player is a rule that specifies how to extend plays.
Formally, a \emph{strategy} $\straa$ for player~1 is a function 
$\straa$: $V^* \cdot V_1 \to V$ such that for all $w \in V^*$ and 
all $v \in V_1$ we have $\straa(w \cdot v) \in \Out(v)$, 
and analogously for player~2 strategies.
We write $\Straa$ and $\Strab$ for the sets of all strategies for 
player~1 and player~2, respectively.
A \emph{memoryless} strategy for player~1 is independent of the 
history and depends only on the current state, and can be described 
as a function $\straa: V_1 \to V$, and similarly for player~2.
Given a starting vertex $v\in V$, a strategy $\straa\in\Straa$ for player~1, 
and a strategy $\strab\in\Strab$ for player~2, there is a unique play, 
denoted $\pat(v,\straa,\strab)=\sseq$, which is defined as follows: 
$v_0=v$ and for all $k \geq 0$,
if $v_k \in V_1$, then $\straa(v_k)=v_{k+1}$, and
if $v_k \in V_2$, then $\strab(v_k)=v_{k+1}$.
We say a play $\pat$ is \emph{consistent} with a strategy of a player, 
if there is a strategy of the opponent such that given both the 
strategies the unique play is $\pat$. 

\noindent{\em Objectives.} An objective $\Phi$ for a game graph is a 
desired subset of plays.
For a play $\pat = \sseq\in \Omega$,  we define $\Inf(\pat) = 
\set{v \in V \mid \mbox{$v_k = v$ for infinitely many $k \geq 0$}}$
to be the set of vertices that occur infinitely often in~$\pat$.
We define reachability, safety and parity objectives with three 
priorities.
\begin{enumerate}

 \item 
  \emph{Reachability and safety objectives.}
  Given a set $T \subseteq V$ of vertices, the reachability objective 
  $\Reach{T}$ requires that some vertex in $T$ be visited,
  and dually, 
  the safety objective $\Safe{F}$ requires that only vertices in $F$ 
  be visited.
  Formally, the sets of winning plays are
  $\Reach{T}= \set{\sseq \in \Pat \mid 
  \exists k \geq 0. \ v_k \in T}$
  and 
  $\Safe{F}=\set{\sseq \in \Pat \mid 
  \forall k \geq 0.\ v_k \in F}$.
  The reachability and safety objectives are dual in the sense that 
  $\Reach{T}= \Pat \setminus \Safe{V \setminus T}$.

\item
  \emph{Parity objectives with three priorities.}
  Consider a priority function $p:V \to \set{0,1,2}$ that maps
  every vertex to a priority either~0, 1 or~2. 
  The parity objective requires that the minimum priority visited
  infinitely often is even. 
  In other words, the objectives require that either vertices 
  with priority~0 are visited infinitely often, or vertices with 
  priority~1 are visited finitely often.
  Formally the set of winning plays is $\Parity(p)=
  \set{ \pat \mid \Inf(\pat) \cap p^{-1}(0) \neq \emptyset \text{ or } 
   \Inf(\pat) \cap p^{-1}(1) = \emptyset}$.

\end{enumerate}

\smallskip\noindent{\em Winning strategies and sets.}
Given an objective $\Phi\subseteq\Pat$ for player~1, a strategy 
$\straa\in\Straa$ is a \emph{winning strategy}
for player~1 from a vertex $v$ if for all player~2 strategies $\strab\in\Strab$ 
the play $\pat(v,\straa,\strab)$ is winning, i.e., 
$\pat(v,\straa,\strab) \in \Phi$.
The winning strategies for player~2 are defined analogously by switching the 
role of player~1 and player~2 in the above definition.
A vertex $v\in V$ is winning for player~1 with respect to the objective 
$\Phi$ if player~1 has a winning strategy from $v$.
Formally, the set of \emph{winning vertices for player~1} with respect to 
the objective $\Phi$ is the set
$W_1(\Phi) =\set{v \in V \mid \exists \straa\in\Straa. 
\ \forall \strab\in\Strab.\ \pat(v,\straa,\strab) \in \Phi}$.
Analogously, the set of all winning vertices for player~2 with respect to an 
objective $\Psi\subseteq\Pat$ is 
$W_2(\Psi) =\set{v \in V \mid \exists \strab\in\Strab. \ 
\forall \straa\in\Straa.\ \pat(v,\straa,\strab) \in \Psi}.$

\begin{theorem}[Determinacy and complexity]
\label{thrm:determinacy}
The following assertions hold.
\begin{enumerate}
\item For all game graphs $G=((V,E),(V_1,V_2))$, all objectives $\Phi$ for 
player~1 where $\Phi$ is reachability, safety, or parity objectives with three 
priorities, and the complementary objective $\Psi=\Pat \setminus \Phi$ for 
player~2, we have $W_1(\Phi)=V \setminus W_2(\Psi)$; and memoryless winning 
strategies exist for both players from their respective winning set~\cite{EJ91}.

\item The winning set $W_1(\Phi)$ can be computed in linear time ($O(|V|+|E|)$) 
for reachability and safety objectives $\Phi$~\cite{Immerman81,Beeri}; and 
in quadratic time ($O(|V|\cdot |E|)$) for parity objectives 
with three priorities~\cite{Jur00}.

\end{enumerate}
\end{theorem}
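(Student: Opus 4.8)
The plan is to treat the two assertions separately, first establishing everything for reachability and safety objectives, and then bootstrapping to parity objectives with three priorities.

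\textbf{Reachability and safety.} For $T \subseteq V$ define the one-step controllable predecessor $\mathrm{Pre}_1(X) = \set{v \in V_1 \mid \Out(v) \cap X \neq \emptyset} \cup \set{v \in V_2 \mid \Out(v) \subseteq X}$, and the player-$1$ attractor $\mathrm{Attr}_1(T)$ as the limit of the monotone sequence $A_0 = T$, $A_{i+1} = A_i \cup \mathrm{Pre}_1(A_i)$. Assign to each $v \in A_i \setminus A_{i-1}$ the rank $i$. The memoryless strategy that from a player-$1$ vertex of rank $i \ge 1$ picks a successor of rank $< i$ forces the token to reach $T$, so $\mathrm{Attr}_1(T) \subseteq W_1(\Reach{T})$. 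Conversely $V \setminus \mathrm{Attr}_1(T)$ is a player-$1$ trap: every player-$1$ vertex there has all successors outside $\mathrm{Attr}_1(T)$ (else it would lie in some $\mathrm{Pre}_1$), and every player-$2$ vertex there has at least one such successor, so the corresponding memoryless player-$2$ strategy keeps the token in $V \setminus T$ forever, giving $V \setminus \mathrm{Attr}_1(T) \subseteq W_2(\Safe{V \setminus T})$. As the two inclusions are complementary and $\Reach{T} = \Pat \setminus \Safe{V \setminus T}$, we get $W_1(\Reach{T}) = \mathrm{Attr}_1(T) = V \setminus W_2(\Safe{V \setminus T})$ together with memoryless strategies for both players, which is assertion~1 in the reachability/safety case. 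For the complexity, $\mathrm{Attr}_1(T)$ is computed by backward breadth-first search from $T$: a player-$1$ vertex enters the attractor as soon as one out-edge is explored backward, and each player-$2$ vertex carries a counter of not-yet-attracted successors and enters when it hits zero; every edge is processed a constant number of times, so the running time is $O(|V| + |E|)$, as in~\cite{Immerman81,Beeri}.

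\textbf{Parity with three priorities.} For $p : V \to \set{0,1,2}$ I would use the classical Zielonka/McNaughton recursion, which collapses to a polynomial procedure because there are only three priorities. Memoryless determinacy and the identity $W_1(\Parity(p)) = V \setminus W_2(\Pat \setminus \Parity(p))$ follow by induction on $|V|$: compute $Z = \mathrm{Attr}_1(p^{-1}(0))$; in the subgame induced by $V \setminus Z$ the only priorities are $1$ and $2$, i.e.\ it is a (co-)B\"uchi game, solvable by a nested attractor fixpoint; from the player-$2$ winning region $U$ of that subgame, recurse on $V \setminus \mathrm{Attr}_2(U)$, which has strictly fewer vertices, and glue the memoryless winning strategies together with the attractor strategies in the standard way. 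The complexity bound follows by observing that with three priorities the recursion is a fixed-depth nesting of greatest/least fixpoints of attractor operators; each such fixpoint performs at most $|V|$ attractor computations of cost $O(|V| + |E|)$, so the total work is $O(|V| \cdot (|V| + |E|)) = O(|V| \cdot |E|)$ (using $|E| \ge |V|$, since every vertex has an out-edge), matching~\cite{Jur00}.

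\textbf{Main obstacle.} The reachability/safety half and the linear-time attractor bound are routine; the delicate point is the parity recursion. Two things need care: (i)~verifying that every subgame obtained by deleting an attractor is again a valid game graph, i.e.\ that the ``every vertex has an out-edge'' hypothesis of Theorem~\ref{thrm:determinacy} is preserved — this is exactly where it matters that one always deletes an \emph{attractor} rather than an arbitrary set; and (ii)~checking that the strategies composed across the recursion stay \emph{memoryless} (each vertex is controlled in exactly one recursive piece, so the composed map on vertices is well defined) and that the recursion depth is bounded tightly enough to yield $O(|V| \cdot |E|)$ rather than a larger polynomial. Alternatively, as the statement already indicates, both assertions can simply be quoted from~\cite{EJ91} (part~1) and~\cite{Immerman81,Beeri,Jur00} (part~2); the sketch above is how one would reconstruct them.
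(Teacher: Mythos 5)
The paper offers no proof of this theorem: it is quoted verbatim from the literature, with part~1 attributed to~\cite{EJ91} and part~2 to~\cite{Immerman81,Beeri,Jur00}, so your closing remark that the statement ``can simply be quoted'' is exactly what the paper does. Your reconstruction of the reachability/safety half is correct and standard: the attractor $\mathrm{Attr}_1(T)$ with the rank-decreasing memoryless strategy, the trap argument for the complement, and the counter-based backward search giving $O(|V|+|E|)$ are all sound, and you correctly identify that removing an attractor preserves the ``every vertex has an out-edge'' hypothesis for subgames.

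There is, however, a genuine gap in your complexity argument for the parity case. The McNaughton/Zielonka recursion you describe does establish memoryless determinacy and the complementation identity, but it does not yield $O(|V|\cdot|E|)$ for three priorities. In that recursion the outer loop (peel off $\mathrm{Attr}_2(U)$ and recurse) can run up to $|V|$ times, and \emph{each} iteration must solve a fresh two-priority (B\"uchi) subgame, which itself costs $\Theta(|V|\cdot|E|)$; the inner fixpoint restarts every time the outer one advances, so the attractor computations multiply rather than add, giving $O(|V|^2\cdot|E|)$ overall. Your sentence ``each such fixpoint performs at most $|V|$ attractor computations, so the total work is $O(|V|\cdot(|V|+|E|))$'' undercounts precisely this restarting. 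The $O(|V|\cdot|E|)$ bound the theorem asserts is the one obtained from Jurdzi\'nski's small progress measures~\cite{Jur00}, whose running time $O(d\cdot|E|\cdot(|V|/\lfloor d/2\rfloor)^{\lfloor d/2\rfloor})$ specializes to $O(|V|\cdot|E|)$ at $d=3$; to prove part~2 for parity objectives you should either present that progress-measure argument or, as the paper does, cite it.
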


\newcommand{\Succ}{\mathsf{Succ}}
\newcommand{\Win}{\mathsf{Win}}

\section{Fair Alternating Simulation}\label{sec:fair_alt}
In this section we will present two definitions of fair alternating simulation,
show their equivalence, present algorithms for solving fair alternating 
simulations, and our algorithms specialized to fair simulation will improve
the bound of the previous algorithm (Theorem~\ref{thrm:fair}).
Similar to fair TSs, a \emph{fair ATS} $\mathcal{K} = \structure{K, F}$ 
consists of an ATS $K$ and a B\"uchi fairness constraint $F$ for $K$. 

To extend the definition of fair simulation to fair alternating simulation we 
consider the notion of strategies for ATSs.
Consider two ATSs 
$K = \structure{\Sigma,W,\whw,A_1,A_2,P_1,P_2,L,\delta}$ and 
$K' = \structure{\Sigma,W',\whw',A_1',A_2',P_1',P_2',L',\delta'}$ and 
the corresponding fair ATSs $\mathcal{K} = \structure{K,F}$ and 
$\mathcal{K}' = \structure{K',F'}$. 
We use the following notations: 
\begin{itemize} \renewcommand{\labelitemi}{$-$}

\item $\tau : (W \times W')^+ \rightarrow A_1$ is a strategy employed by 
Agent~1 in $\mathcal{K}$. The aim of the strategy is to choose transitions 
in $\mathcal{K}$ to make it difficult for Agent~1 in $\mathcal{K}'$ to 
match them. The strategy acts on the past run on both systems.

\item $\tau' : (W \times W')^+ \times A_1 \rightarrow A_1'$ is a strategy 
employed by Agent~1 in $\mathcal{K}'$. The aim of this strategy is to match 
actions in $\mathcal{K}'$ to those made by Agent~1 in $\mathcal{K}$. 
The strategy acts on the past run on both the systems, as well as the action 
chosen by Agent~1 in $\mathcal{K}$.

\item $\xi' : (W \times W')^+ \times A_1 \times A_1' \rightarrow A_2'$ is a 
strategy employed by Agent~2 in $\mathcal{K}'$. The aim of this strategy  is 
to choose actions in $\mathcal{K}'$ to make it difficult for Agent~2 to match 
them in $\mathcal{K}$.  
The strategy acts on the past run of both the systems, as well as the actions 
chosen by Agent~1 in $\mathcal{K}$ and $\mathcal{K}'$.

\item $\xi : (W \times W')^+ \times A_1 \times A_1' \times A_2' \rightarrow A_2$
is a strategy employed by Agent~2 in $\mathcal{K}$. 
Intuitively, the aim of this strategy of Agent~2 is to choose actions in 
$\mathcal{K}$ to show that Agent~1 is not as powerful in $\mathcal{K}$ as in 
$\mathcal{K}'$, i.e., in some sense the strategy of Agent~2 will witness that 
the strategy of Agent~1 in $\mathcal{K}$ does not satisfy certain desired 
property. 
The strategy acts on the past run of both the systems, as well as the actions 
chosen by Agent~1 in $\mathcal{K}$ and both the agents in $\mathcal{K}'$.

\item $\rho(w,w',\tau,\tau',\xi,\xi')$ is the run that emerges in 
$\mathcal{K}$ if the game starts with $\mathcal{K}$ on state $w$, 
$\mathcal{K}'$ on state $w'$ and the agents employ strategies $\tau$, $\tau'$, $\xi$ and $\xi'$ as described above, and  
$\rho'(w,w',\tau,\tau',\xi,\xi')$ is the corresponding run that emerges in 
$\mathcal{K}'$.
\end{itemize}

\begin{definition}[Weak fair alternating simulation]\label{defn:weak_alt_fair_sim}
A binary relation $S \subseteq W \times W'$ is a \emph{weak fair alternating simulation (WFAS)} 
of $\mathcal{K}$ by $\mathcal{K}'$ if the following two conditions hold 
for all $(w,w') \in W\times W'$:
\begin{enumerate}
\item If $S(w,w')$, then $L(w) = L'(w')$.
\item There exists a strategy $\tau' : (W \times W')^+ \times A_1 \rightarrow A_1'$ for Agent~1 in 
$\mathcal{K}'$, such that for all strategies $\tau : (W \times W')^+ \rightarrow A_1$ for Agent~1 
in $\mathcal{K}$, there exists a strategy $\xi : (W \times W')^+ \times A_1 \times A_1' \times A_2 \rightarrow A_2'$ 
for Agent~2 in $\mathcal{K}$, such that for all strategies $\xi' : (W \times W')^+ \times A_1 \times A_1' \rightarrow A_2'$ 
for Agent~2 on $\mathcal{K}'$, if $S(w,w')$ and $\rho(w,w',\tau,\tau',\xi,\xi')$ is a fair $w$-run of $\mathcal{K}$, then
      \begin{itemize} \renewcommand{\labelitemi}{$-$}
      \item $\rho'(w,w',\tau,\tau',\xi,\xi')$ is a fair $w'$-run of $\mathcal{K}'$; and
      \item $\rho'(w,w',\tau,\tau',\xi,\xi')$ $S$-matches $\rho(w,w',\tau,\tau',\xi,\xi')$.
      \end{itemize}
  \end{enumerate}
We denote by $\preceq_{\fairalt}^{\weak}$ the maximum WFAS relation between 
$\mathcal{K}$ and $\mathcal{K'}$.
We say that the fair ATS $\mathcal{K'}$ \emph{weak-fair-alternate simulates} 
the fair ATS $\mathcal{K}$ iff $(\whw,\whw')\in\preceq_{\fairalt}^{\weak}$.
\end{definition}

\begin{definition}[Strong fair alternating simulation]\label{defn:strong_alt_fair_sim}
A binary relation $S \subseteq W \times W'$ is a \emph{strong fair alternating simulation (SFAS)} 
of $\mathcal{K}$ by $\mathcal{K}'$ if the following two conditions hold
for all $(w,w') \in W \times W'$:
\begin{enumerate}
\item If $S(w,w')$, then $L(w) = L'(w')$.
\item There exist strategies $\tau' : (W \times W')^+ \times A_1 \rightarrow A_1'$ for Agent~1 in 
$\mathcal{K}'$ and $\xi : (W \times W')^+ \times A_1 \times A_1' \times A_2 \rightarrow A_2'$ 
for Agent~2 in $\mathcal{K}$,  such that for all strategies $\tau : (W \times W')^+ \rightarrow A_1$ for Agent~1 
in $\mathcal{K}$ and  $\xi' : (W \times W')^+ \times A_1 \times A_1' \rightarrow A_2'$ 
for Agent~2 on $\mathcal{K}'$, if $S(w,w')$ and $\rho(w,w',\tau,\tau',\xi,\xi')$ is a fair $w$-run of $\mathcal{K}$, then
      \begin{itemize} \renewcommand{\labelitemi}{$-$}
      \item $\rho'(w,w',\tau,\tau',\xi,\xi')$ is a fair $w'$-run of $\mathcal{K}'$; and
      \item $\rho'(w,w',\tau,\tau',\xi,\xi')$ $S$-matches $\rho(w,w',\tau,\tau',\xi,\xi')$.
      \end{itemize}
  \end{enumerate}
We denote by $\preceq_{\fairalt}^{\strong}$ the maximum SFAS relation between 
$\mathcal{K}$ and $\mathcal{K'}$.
We say that the fair ATS $\mathcal{K'}$ \emph{strong-fair-alternate simulates} 
the fair ATS $\mathcal{K}$ iff $(\whw,\whw')\in\preceq_{\fairalt}^{\strong}$.
\end{definition}

The difference in the definitions of weak and strong alternating fair 
simulation is in the order of the quantifiers in the strategies.
In the weak version the quantifier order is exists forall exists forall, 
whereas in the strong version the order is exists exists forall forall. 
Thus strong fair alternating simulation implies weak fair alternating 
simulation.
We will show that both the definitions coincide and present algorithms to 
compute the maximum fair alternating simulation.
Also observe that both the weak and strong version coincide with fair 
simulation for TSs.
We will present a reduction of weak and strong fair alternating simulation 
problem to games with parity objectives with three priorities.
We now present a few notations related to the reduction.

\smallskip\noindent{\bf Successor sets.} 
Given an ATS $K$, for a state $w$ and an action $a \in P_1(w)$, let 
$\Succ(w,a)=\set{w' \mid \exists b \in P_2(w) \text{ such that } w'=\delta(w,a,b)}$
denote the possible successors of $w$ given an action $a$ of Agent~1 (i.e.,
successor set of $w$ and $a$).
Let $\Succ(K)=\set{\Succ(w,a) \mid w\in W, a \in P_1(w)}$ denote the set of 
all possible successor sets. 
Note that $|\Succ(K)| \leq |W|\cdot |A_1|$.

\smallskip\noindent{\bf Game construction.}
Let $K=\structure{\Sigma,W,\whw,A_1,A_2,P_1,P_2,L,\delta}$ 
and $K'=\structure{\Sigma,W',\whw',A_1',A_2',P_1',P_2',L',\delta'}$ be two ATSs, 
and let $\mathcal{K} = \structure{K,F}$ and $\mathcal{K}' = \structure{K',F'}$ be the 
two corresponding fair ATSs. 
We will construct a game graph $G=((V,E),(V_1,V_2))$ with a parity objective.
Before the construction we assume that from every 
state $w \in K$ there is an Agent~1 strategy to ensure fairness in $K$. 
The assumption is without loss of generality because
if there is no such strategy from $w$, then trivially all states $w'$ with 
same label as $w$ simulate $w$ (as Agent~2 can falsify the fairness from $w$).
The states in $K$ from which fairness cannot be ensured can be identified with 
a quadratic time pre-processing step in $K$ (solving B\"uchi games), and 
hence we assume that in all remaining states in $K$ fairness can be ensured.
The game construction is as follows:
\begin{itemize} \renewcommand{\labelitemi}{$-$}
\item \emph{Player~1 vertices:} 
$V_1 = \set{\structure{w,w'} \mid w \in W, w' \in W' \text{ such that } L(w) = L'(w')} 
\cup \big(\Succ(K) \times \Succ(K')\big) \cup \set{\frownie}$
\item \emph{Player~2 vertices:} $V_2 = \Succ(K) \times W' \times \set{\#, \$}$

\item \emph{Edges.} We specify the edges as the following union: $E = E_1 \cup E_2 \cup E_3 \cup E_4^1 \cup E_4^2 \cup E_5$
        \begin{eqnarray*}
          E_1 &=&   \set{ (\structure{w,w'}, \structure{\Succ(w,a),w',\#}) \mid \structure{w,w'}\in V_1, a \in P_1(w) } \\[1ex]
          E_2 &=&   \set{ (\structure{T,w',\#}, \structure{T,\Succ(w',a')}) \mid \structure{T,w',\#} \in V_2, a' \in P_1'(w') }\\[1ex]
          E_3 &=&   \set{ (\structure{T,T'}, \structure{T,r',\$}) \mid \structure{T,T'} \in V_1, r' \in T' }\\[1ex]
          E_4^1 &=& \set{ (\structure{T,r',\$}, \structure{r,r'}) \mid \structure{T,r',\$} \in V_2, r \in T, L(r) = L'(r') }\\[1ex]
          E_4^2 &=& \set{ (\structure{T,r',\$}, \frownie) \mid \structure{T,r',\$} \in V_2 \text{ such that } \forall r \in T \cdot L(r) \neq L'(r') }\\[1ex]
          E_5 &=&   \set{ (\frownie, \frownie) }
        \end{eqnarray*}
  \end{itemize}
The intuitive description of the game graph is as follows: 
(i)~the player~1 vertices are either state pairs $\structure{w,w'}$ with same label, or 
pairs $\structure{T,T'}$ of successor sets, or a state $\frownie$; and 
(ii)~the player~2 vertices are tuples $\structure{T,w',\bowtie}$ where $T$ is a 
successor set in $\Succ(K)$, $w'$ a state in $K'$ and 
$\bowtie \in \set{\#,\$}$. 
The edges are described as follows: 
(i)~$E_1$ describes that in vertices $\structure{w,w'}$ player~1 can choose an 
action $a \in P_1(w)$, and then the next vertex is the  player~2 vertex
$\structure{\Succ(w,a),w',\#}$;
(ii)~$E_2$ describes that in vertices $\structure{T,w',\#}$ player~2 can choose an
action $a' \in P_1(w')$ and then the next vertex is $\structure{T,\Succ(w',a')}$;
(iii)~$E_3$ describes that in states $\structure{T,T'}$ player~1 can choose a state 
$r'\in T'$ (which intuitively corresponds to an action $b' \in P_2'(w')$) 
and then the next vertex is $\structure{T,r',\$}$;
(iv)~the edges $E_4^1 \cup E_4^2$ describes that in states $\structure{T,r',\$}$ 
player~2 can either choose a state $r\in T$ that matches the label of $r'$ and then 
the next vertex is the player~1 vertex $\structure{r,r'}$ (edges
$E_4^1$) or if there is no match, then the next vertex is $\frownie$; and 
(v)~finally $E_5$ specifies that the vertex $\frownie$ is an absorbing (sink) 
vertex with only self-loop.
The three-priority parity objective $\Phi^*$ for player~2 with the priority function 
$p$ is specified as follows: 
for vertices $v \in (W \times F')\cap V_1$ we have $p(v)=0$; 
for vertices $v \in ((F \times W' \setminus W \times F')\cap V_1) 
\cup \set{\frownie}$ we have $p(v)=1$;
and all other vertices have priority~2.

\smallskip\noindent{\em Plays and runs.}
 Every $\structure{w,w'}$-play on the game (plays that start from vertex $\structure{w,w'}$) induces 
runs on the structures $\mathcal{K}$ and $\mathcal{K}'$ as follows :
\begin{itemize} \renewcommand{\labelitemi}{$-$}
\item $\structure{w,w'}$, $\structure{T_0,w',\#}$, $\structure{T_0,T_0'}$, $\structure{T_0,w_1',\$}$, $\structure{w_1,w_1'}$, 
$\structure{T_1,w_1',\#}$, $\structure{T_1,T_1'}$, $\structure{T_1,w_2',\$}$, $\structure{w_2,w_2'}$, $\dots$ corresponds to runs 
$\ov{w} = w,w_1,w_2\dots$ and $\ov{w}' = w',w_1',w_2'\dots$.
\item $\structure{w,w'}$, $\structure{T_0,w',\#}$, $\structure{T_0,T_0'}$, $\structure{T_0,w_1',\$}$, $\structure{w_1,w_1'}$, $\structure{T_1,w_1',\#}$, $\structure{T_1,T_1'}$, $\structure{T_1,w_2',\$}$, $\structure{w_2,w_2'}$, $\dots$, $\structure{w_{n-1},w_{n-1}'}$, $\structure{T_{n-1},w_{n-1}',\#}$, $\structure{T_{n-1},T_{n-1}'}$, $\structure{T_{n-1},w_n',\$}$, $\frownie$, $\frownie$, $\dots$ corresponds to finite runs $\bar{w} = w, w_1, w_2 \dots w_{n-1}, w_{n}$ and $\bar{w}' = w', w_1', w_2' \dots w_{n}'$, for some $w_n\in T_{n-1}$.
\end{itemize}

\begin{lemma} \label{lemma:alt_fair_correspondence}
Consider a play $\overline{\structure{w,w'}}$ = $\structure{w,w'}$, $\structure{T_0,w',\#}$, $\structure{T_0,T_0'}$, $\structure{T_0,w_1',\$}$, $\structure{w_1,w_1'}$, $\dots$ on 
the parity game.
Then the following assertions hold:
\begin{enumerate}
\item If the play satisfies the parity objective, then the corresponding runs 
$\ov{w} =w_0,w_1, w_2\dots$ in  $\mathcal{K}$ (where $w_0=w$) and 
$\ov{w'} = w_0',w_1',w_2'\dots$ in $\mathcal{K}'$ (where $w_0'=w'$)  
satisfy that if $\ov{w}$ is fair, then $\ov{w}'$ is fair and for all $i \geq 0$ we have 
$L(w_i) = L'(w_i')$ .

\item If the play does not satisfy the parity objective, then 
(i)~if the vertex $\frownie$ is not reached, then the corresponding runs 
$\ov{w} = w,w_1, w_2\dots$ in  $\mathcal{K}$ and 
$\ov{w'} = w',w_1',w_2'\dots$ in $\mathcal{K}'$ 
satisfy that $\ov{w}$ is fair and $\ov{w}'$ is not fair;
(ii)~if the vertex $\frownie$ is reached, then  for the corresponding finite 
runs $\ov{w} = w, w_1, w_2 \dots w_{n}$ and $\ov{w}' = w', w_1', w_2' \dots w_{n}'$ we have 
that $w_{n}'$ does not match $w_{n}$ (i.e., $L(w_n)\neq L'(w_n')$).
\end{enumerate}
\end{lemma}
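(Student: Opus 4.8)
The plan is to argue purely from the shape of the game graph and the priority function $p$, splitting on whether the play reaches the sink vertex $\frownie$. First I would record two structural observations that follow immediately from the definitions of $E_1,\dots,E_5$ and of $p$. (a)~Along any play the only vertices of priority $0$ or $1$ are the state-pair vertices $\structure{w_i,w_i'}\in V_1$ together with $\frownie$, since every vertex of the form $\structure{T,w',\#}$, $\structure{T,w',\$}$ or $\structure{T,T'}$ has priority $2$. (b)~Following one block $\structure{w_i,w_i'}$, $\structure{T_i,w_i',\#}$, $\structure{T_i,T_i'}$, $\structure{T_i,w_{i+1}',\$}$, $\structure{w_{i+1},w_{i+1}'}$ of the play, one reads off from $E_1,E_2,E_3,E_4^1$ that $T_i=\Succ(w_i,a)$ for some $a\in P_1(w_i)$, that $T_i'=\Succ(w_i',a')$ for some $a'\in P_1'(w_i')$, that $w_{i+1}'\in T_i'$, and that $w_{i+1}\in T_i$ with $L(w_{i+1})=L'(w_{i+1}')$; hence $\ov w$ and $\ov{w'}$ really are runs of $\mathcal{K}$ and $\mathcal{K}'$ and their labels agree position by position, unless and until $\frownie$ is reached. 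In particular this already establishes the label clause of part~(1).

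\textbf{Case 1: the play reaches $\frownie$.} Since $\frownie$ is absorbing and has priority $1$ while no priority-$0$ vertex occurs after it, the minimum priority seen infinitely often is $1$, so the play fails the parity objective; in particular any play that satisfies the parity objective falls under Case~2 below. The last edge before $\frownie$ must lie in $E_4^2$ and lead from some $\structure{T_{n-1},w_n',\$}$ to $\frownie$, which by the definition of $E_4^2$ is possible only if $L(r)\neq L'(w_n')$ for every $r\in T_{n-1}$. Any completion $w_n$ of the finite run $\ov w$ lies in $T_{n-1}$, so $L(w_n)\neq L'(w_n')$: this is part~(2)(ii).

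\textbf{Case 2: the play never reaches $\frownie$.} Then it is an infinite concatenation of blocks as above, so by observation~(a) the priorities occurring infinitely often are exactly those of the infinitely many state-pair vertices $\structure{w_i,w_i'}$, and from the priority function $p(\structure{w_i,w_i'})=0$ iff $w_i'\in F'$, while $p(\structure{w_i,w_i'})=1$ iff $w_i\in F$ and $w_i'\notin F'$. Consequently priority $0$ occurs infinitely often iff $\ov{w'}$ is fair; and, in the situation where $\ov{w'}$ is not fair (so $w_i'\notin F'$ for all large $i$), priority $1$ occurs infinitely often iff $w_i\in F$ for infinitely many $i$, equivalently iff $\ov w$ is fair. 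Therefore the play satisfies the parity objective (priority $0$ infinitely often, or priority $1$ only finitely often) if and only if ``$\ov{w'}$ is fair, or $\ov w$ is not fair'', which is just the implication ``$\ov w$ fair $\Rightarrow$ $\ov{w'}$ fair''; together with the label matching from observation~(b) this is part~(1). Dually, the play fails the parity objective if and only if $\ov w$ is fair and $\ov{w'}$ is not, which is part~(2)(i). Since Cases~1 and~2 exhaust all plays, the lemma follows.

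The one point that needs care is the bookkeeping in Case~2: to replace the conjunction $w_i\in F\wedge w_i'\notin F'$ defining priority $1$ by the single condition $w_i\in F$ one must first invoke that $w_i'\notin F'$ holds for all sufficiently large $i$, which is exactly the content of ``$\ov{w'}$ is not fair''; and one must check that ``priority $1$ infinitely often together with priority $0$ finitely often'' is literally the negation of the parity condition. Everything else is a direct unfolding of the edge sets and of the priority labeling.
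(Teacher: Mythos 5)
Your proof is correct and takes essentially the same approach as the paper: a case split on whether $\frownie$ is reached, followed by unfolding the priority assignment on the state-pair vertices and the definitions of the edge sets. You package Case~2 as a single biconditional (parity satisfied iff $\ov{w}$ fair implies $\ov{w'}$ fair) where the paper proves the two directions separately, but the underlying argument is identical.
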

\begin{proof} 
We prove both the items below:
\begin{enumerate}

\item If the parity objective is satisfied, it follows that the vertex  
$\frownie$ is never reached. 
By construction of the game, vertices of the form $\structure{w,w'}$ satisfy 
that $L(w)=L'(w')$, and it follows that for all $i \geq 0$ we have 
$L(w_i)=L'(w_i')$.
Moreover, as the parity objective is satisfied, it follows that if in $K$, 
states in $F$ are visited infinitely often, then in $K'$, states in $F'$ 
must be visited infinitely often,
(as otherwise priority~1 vertices will be visited infinitely often and 
priority~0 vertices only finitely often). 
This completes the proof of the first item.

\item If the parity objective is not satisfied, and the vertex $\frownie$ 
is never reached, it follows that priority~1 vertices in 
$(F\times W' \setminus W\times F')\cap V_1$ 
are visited infinitely often (hence $F$ is visited infinitely often in $K$) and 
priority~0 vertices ($(W \times F')\cap V_1$) are visited finitely often 
(hence $F'$ is visited finitely often in $K'$).
Thus we have a fair run in $K$, but the run in $K'$ is not fair.
If the $\frownie$ vertex is reached, then by construction it follows that 
$L(w_n)\neq L'(w_n')$. 

\end{enumerate}
The desired result follows.
\hfill\qed
\end{proof}

\smallskip\noindent{\em Consequence of Lemma~\ref{lemma:alt_fair_correspondence}.}
We have the following consequence of the lemma.
If a play satisfies the parity objective, then the corresponding runs satisfy 
that if we have a fair run in $K$, then the run in $K'$ is both fair and matches
the run in $K$.
If the play does not satisfy the parity objective, then we have two cases:
(i)~the run in $K$ is fair, but the run in $K'$ is not fair; or 
(ii)~the run in $K'$ does not match the run in $K$, and since we assume 
that from every state in $K$ fairness can be ensured, it follows that once
we have the finite non-matching run, we can construct a fair run in $K$
that is not matched in $K'$.
Thus if the play does not satisfy the parity objective, then in
both cases we have a fair run in $K$ and the run in $K'$ is either not fair 
or does not match the run in $K$.

\begin{proposition} \label{prop:alt_fair_sim_correctness}
Let $\Win_2= \set{(w_1,w_2) \mid  \structure{w_1,w_2} \in V_1, \structure{w_1,w_2} \in W_2(\Phi^*), \text{i.e.,there is a winning state for player 2}}$.
Then we have 
\[
\Win_2= \preceq_{\fairalt}^{\weak} = \preceq_{\fairalt}^{\strong}.
\]
\end{proposition}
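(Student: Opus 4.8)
The plan is to establish the chain of equalities by proving two inclusions and one easy implication. Since strong fair alternating simulation has the more restrictive quantifier order (the witness strategies $\tau'$ and $\xi$ are chosen before Agent~1's strategy $\tau$ and Agent~2's strategy $\xi'$ in $\mathcal{K}'$), every SFAS is a WFAS, so $\preceq_{\fairalt}^{\strong} \subseteq \preceq_{\fairalt}^{\weak}$ holds trivially, as already noted in the text. Thus it suffices to show (i) $\preceq_{\fairalt}^{\weak} \subseteq \Win_2$ and (ii) $\Win_2 \subseteq \preceq_{\fairalt}^{\strong}$; combined with determinacy (Theorem~\ref{thrm:determinacy}) and transitivity of $\subseteq$, all three sets coincide.

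For direction (ii), $\Win_2 \subseteq \preceq_{\fairalt}^{\strong}$, I would first argue that the set $\Win_2$, viewed as a relation on $W \times W'$, is itself a strong fair alternating simulation; maximality then gives $\Win_2 \subseteq \preceq_{\fairalt}^{\strong}$. The label condition $L(w)=L'(w')$ is immediate since $\structure{w,w'} \in V_1$ only when the labels agree. For condition~2, fix $(w,w') \in \Win_2$ and take a memoryless player~2 winning strategy $\sigma$ from $\structure{w,w'}$, which exists by part~1 of Theorem~\ref{thrm:determinacy} since $\Phi^*$ is a three-priority parity objective. I would read off from $\sigma$ the witness strategies: the player~2 choices at vertices of the form $\structure{T,w',\#}$ (edge class $E_2$) define $\tau'$ — they pick the matching action $a' \in P_1'(w')$ — and the player~2 choices at vertices $\structure{T,r',\$}$ (edges $E_4^1/E_4^2$) define $\xi$, picking a state $r \in T$, i.e., an action $b \in P_2(w)$ realizing a given successor. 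Crucially these are fixed up front (memoryless), matching the $\exists\tau'\exists\xi$ prefix of SFAS. Then any Agent~1 strategy $\tau$ in $\mathcal{K}$ and any Agent~2 strategy $\xi'$ in $\mathcal{K}'$ correspond to player~1 choices at vertices $\structure{w,w'}$ (edges $E_1$) and at vertices $\structure{T,T'}$ (edges $E_3$) respectively; against $\sigma$ these produce a play that, by the Consequence of Lemma~\ref{lemma:alt_fair_correspondence}, never exhibits a fair unmatched run in $K$ — i.e., if $\rho(w,w',\tau,\tau',\xi,\xi')$ is fair then $\rho'$ is fair and $S$-matches it. One subtlety: the $\frownie$ vertex. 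Because we preprocessed $K$ so that Agent~1 can force fairness from every state, if the play reaches $\frownie$ there is a fair completion of the finite $K$-run with no matching $K'$-run, contradicting that we are in $\Win_2$; so I must note that winning plays never reach $\frownie$ and hence $\tau'$ is always well-defined (the matching action exists), and symmetrically handle why the play stays in the "good" part of the graph.

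For direction (i), $\preceq_{\fairalt}^{\weak} \subseteq \Win_2$, I would take $(w,w') \in \preceq_{\fairalt}^{\weak}$ and show $\structure{w,w'} \in W_2(\Phi^*)$ by exhibiting a (history-dependent, built on the fly) player~2 strategy in the game. The WFAS witness gives, for each Agent~1 strategy $\tau$, existence of $\xi$; but in the game, player~1's moves at $\structure{w,w'}$-type and $\structure{T,T'}$-type vertices arrive one at a time, so player~2 can interpret the prefix of the play as a partial history of $\tau$ and $\xi'$ and respond using the WFAS witness strategies $\tau'$ (fixed) and $\xi$ (which, in the weak version, may depend on the full $\tau$ — but player~2 only ever needs its value on histories actually played, so responding move-by-move is legitimate). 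I would formalize this by induction on the length of the play, maintaining the invariant that the play so far is consistent with the WFAS witnesses against the prefix of the opponent's strategies read off from player~1's moves. Then, whatever infinite play results, the induced runs $\ov{w}, \ov{w}'$ satisfy the WFAS conclusion, so by the contrapositive of the Consequence of Lemma~\ref{lemma:alt_fair_correspondence} the play satisfies $\Phi^*$; hence player~2 wins from $\structure{w,w'}$.

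The main obstacle I expect is the careful bookkeeping around the quantifier order in direction~(i): one must be sure that the on-the-fly simulation of the WFAS witness $\xi$ (which is allowed to depend on all of $\tau$) is valid when player~2 only sees $\tau$ incrementally — the point being that a strategy in the game is itself a function of the finite history, so it can encode "the response $\xi$ would give to the $\tau$ so far," and consistency of plays guarantees no conflict arises. The second delicate point, shared by both directions, is the treatment of $\frownie$ and the standing assumption that fairness is Agent~1-enforceable in $K$, which is exactly what lets a finite non-matching run be extended to an infinite fair-but-unmatched run, tying the $\frownie$ sink to genuine simulation failure.
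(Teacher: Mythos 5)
Your overall architecture coincides with the paper's: note $\preceq_{\fairalt}^{\strong} \subseteq \preceq_{\fairalt}^{\weak}$ by the quantifier order, then close the cycle with $\Win_2 \subseteq \preceq_{\fairalt}^{\strong}$ (via a memoryless player-2 winning strategy read off at the $\#$- and $\$$-vertices, with the observation that winning plays avoid $\frownie$) and $\preceq_{\fairalt}^{\weak} \subseteq \Win_2$. The first of these inclusions and the treatment of $\frownie$ are handled essentially as in the paper.

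The gap is in the inclusion $\preceq_{\fairalt}^{\weak} \subseteq \Win_2$. You propose to exhibit a single history-dependent player-2 strategy by simulating the WFAS witness ``on the fly,'' and you correctly flag the obstacle --- the witness $\xi$ is quantified after $\forall \tau$ and so may depend on the whole of $\tau$ --- but your resolution (``player~2 only ever needs its value on histories actually played''; ``consistency of plays guarantees no conflict arises'') does not work. Two Agent-1 strategies $\tau_1,\tau_2$ that agree on the play prefix seen so far may be assigned witnesses $\xi_1,\xi_2$ that disagree at the current history, so ``the response $\xi$ would give to the $\tau$ so far'' is not well defined; and if you break ties arbitrarily, the assembled player-2 strategy follows no single $\xi_\tau$ throughout the play, so the WFAS guarantee no longer applies to the resulting runs. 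Producing one player-2 strategy that works against all $\tau$ and $\xi'$ is exactly the content of ``weak implies strong,'' so this step cannot be finessed by bookkeeping. The paper's fix is to invoke determinacy of parity games (Theorem~\ref{thrm:determinacy}): to show $\structure{w,w'} \in W_2(\Phi^*)$ it suffices to show that against every fixed player-1 strategy $\straa$ there is a counter-strategy $\strab$ (allowed to depend on $\straa$) that wins against that $\straa$. Fixing $\straa$ fixes $\tau$ and $\xi'$ entirely, so the WFAS witness hands you a concrete $\xi$ --- now legitimately dependent on all of $\tau$ --- and $\strab$ is assembled from $\tau'$ and this $\xi$; determinacy then converts ``player~1 has no winning strategy from $\structure{w,w'}$'' into membership in $W_2(\Phi^*)$. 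You mention determinacy in your opening paragraph, but you never deploy it at the one place where it is actually needed.
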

\begin{proof}
We first note that by definition we have 
$\preceq_{\fairalt}^{\strong} \subseteq \preceq_{\fairalt}^{\weak}$. 
Hence to prove the result it suffices to show the following inclusions: 
(i)~$\Win_2 \subseteq \preceq_{\fairalt}^{\strong}$ 
and (ii)~$\preceq_{\fairalt}^{\weak} \subseteq \Win_2$.
We prove the inclusions below:

\begin{enumerate}
\item \emph{(First inclusion: $\Win_2 \subseteq \preceq_{\fairalt}^{\strong}$).}  
We need to show that $\Win_2$ is a strong fair alternating simulation. 
Let $(w,w') \in \Win_2$, then $\structure{w,w'} \in V_1$ and by construction 
of the game we have $L(w) = L'(w')$. 
Hence we need to show that there exist  strategies $\tau'$ and $\xi$, 
such that for all strategies $\tau$ and $\xi'$, 
we have that if $\rho(w,w',\tau,\tau',\xi,\xi')$ is a fair $w$-run in $\mathcal{K}$, 
then $\rho'(w,w',\tau,\tau',\xi,\xi')$ is a fair $w'$-run in $\mathcal{K}'$ 
and $\rho'(w,w',\tau,\tau',\xi,\xi')$ $\Win_2$-matches $\rho(w,w',\tau,\tau',\xi,\xi')$.
      
Since $\structure{w,w'}$ is a winning vertex for player 2, there exists 
a memoryless winning strategy  $\strab^m$ for player 2, 
which will ensure that all plays starting from $\structure{w,w'}$ and 
consistent with $\strab^m$ will satisfy the parity objective.
Note that the strategy $\strab^m$ specifies the next vertices 
for vertices in $\Succ(K) \times W' \times \set{\#,\$}$.
Using $\strab^m$ we can construct the required witness strategies $\tau'$ and $\xi$ for 
strong fair alternating simulation as follows: 
\[
\tau'[\structure{w,w'},\structure{w_1,w_1'}, \dots,\structure{w_{n-1},w_{n-1}'},a]  =  a' \in P_1'(w_{n-1}') 
\]
such that  $\Succ(w_{n-1}',a') = \Pi_{(2)}(\strab^m[\structure{T_{n-1},w_{n-1}',\#}])$, where 
$T_{n-1}=\Succ(w_{n-1},a)$; and
\[
\xi[\structure{w,w'},\structure{w_1,w_1'},\dots,\structure{w_{n-1},w_{n-1}'},a,a',b'] = b \in P_2(w_{n-1}) 
\]
such that $\delta'(w_{n-1},a,b) = \Pi_{(1)}(\strab^m[\structure{T_{n-1},w_{n}',\$}])$, where 
$T_{n-1}=\Succ(w_{n-1},a)$ and $w_n'=\delta'(w_{n-1}',a',b')$;
($\Pi$ is the projection operator, that is, $\Pi_{(k)}(x_1, x_2, \dots, x_n) = x_k$). 
Note that if the game reaches the vertex $\frownie$, then the objective $\Phi^*$ for player~2 is 
violated and player 1 would win. 
Hence,  since $\strab^m$ is a winning strategy for player 2, it ensures that the play never reaches 
$\frownie$. Hence, the outcome of $\strab^m$ on which the projection operator acts always lies 
in $V_1 \setminus \set{\frownie}$, and hence is a \emph{2-tuple}.
Consider a $\structure{w,w'}$-play consistent with the strategy $\strab^m$, where 
$\structure{w,w'}$ is in $\Win_2$.
As described earlier, the $\structure{w,w'}$-play of the parity game defines two runs: a 
$w$-run, $\ov{w} = w,w_1,w_2,\dots$ in $K$ and a $w'$-run $\ov{w}' = w',w_1',w_2'\dots$
in $K'$. 
Since $\structure{w,w'}$ is a winning state for player 2, all successor states 
$\structure{w_k,w_k',\#}$ must also be winning states for player 2. 
Hence $(w_k,w_k') \in \Win_2$ for all $k \in \mathbb{N}$, and it follows
that the run $\ov{w}'$ in $K'$ $\Win_2$-matches $\ov{w}$ in $K$.
Since $\strab^m$ ensures the parity objective $\Phi^*$ (all plays consistent with 
$\strab^m$ satisfy $\Phi^*$), it follows from 
Lemma~\ref{lemma:alt_fair_correspondence} that for all strategies $\tau$ and $\xi$ if 
$\rho(w,w',\tau,\tau',\xi,\xi')$ is a fair run on $\mathcal{K}$ (visits $F$ infinitely 
often), 
then $\rho'(w,w',\tau,\tau',\xi,\xi')$ is a fair run on $\mathcal{K}'$ 
(visits $F'$ infinitely often).
Hence we have the desired first inclusion: $\Win_2 \subseteq \preceq_{\fairalt}^{\strong}$.

\item \emph{(Second inclusion: $\preceq_{\fairalt}^{\weak}\subseteq \Win_2$).}  
We need to show that if $(w,w') \in \preceq_{\fairalt}^{\weak}$, 
then $\structure{w,w'}$ is a winning vertex for player 2 in the game, 
that is, there exists a strategy $\strab$ for player 2 such that against all strategies of 
player~1 the parity objective $\Phi^*$ is satisfied.
By determinacy of parity games on graphs, instead of a winning strategy for 
player~2 it suffices to show that against every strategy $\straa$ of player~1 there 
is a strategy $\strab$ (dependent on $\straa$) for player~2 to ensure winning
against $\straa$.
Since $(w,w') \in \preceq_{\fairalt}^{\weak}$ we have 
(i)~$L(w) = L'(w')$ and 
(ii)~there exist a strategy $\tau'$, such that for all strategies $\tau$, 
there exists a strategy $\xi$, such that for all strategies $\xi'$, 
if $\rho(w,w',\tau,\tau',\xi,\xi')$ is a fair $w$-run $\ov{w}$ in $\mathcal{K}$, 
then $\rho'(w,w',\tau,\tau',\xi,\xi')$ is a fair $w'$-run $\ov{w}'$ in 
$\mathcal{K}'$ and $\rho'(w,w',\tau,\tau',\xi,\xi')$ 
$\preceq_{\fairalt}^{\weak}$-matches $\rho(w,w',\tau,\tau',\xi,\xi')$. 
Consider a strategy $\straa$ for player~1, and let $\tau$ and $\xi'$ be 
the corresponding strategies obtained from $\straa$.
We construct the desired strategy $\strab$ from $\tau'$ and $\xi$ as follows:
\[
\strab[\structure{w,w'}, \dots, \structure{w_{n-1},w_{n-1}'}, \structure{T_{n-1},w_{n-1}',\#}]
 = \structure{T_{n-1},\Succ(w_{n-1}',\tau'[\structure{w,w'},\structure{w_1,w_1'},\dots,\structure{w_{n-1},w_{n-1}'},a])}; 
\]
where $a$ is such that $T_{n-1}=\Succ(w_{n-1},a)$, and
\[
\strab[\structure{w,w'}, \dots, \structure{T_{n-1},T_{n-1}'}, \structure{T_{n-1},w_{n}',\$}]
= \structure{\delta(w_{n-1},a,\xi[\structure{w,w'},\structure{w_1,w_1'},\dots,\structure{w_{n-1},w_{n-1}'},a,a',b']),w_n'}
\]
where $a$ is such that $T_{n-1}=\Succ(w_{n-1},a)$, and $a'$ such that 
$T_{n-1}'=\Succ(w_{n-1}',a')$ and $b'$ such that $\delta'(w_{n-1}',a',b')=w_n'$.
We have $\rho'(w,w',\tau,\tau',\xi,\xi')$ 
$\preceq_{\fairalt}^{\weak}$-matches $\rho(w,w',\tau,\tau',\xi,\xi')$, 
we have $L(w_k) = L'(w_k')$ for all $k \in \mathbb{N}$.
It follows that given the strategy $\straa$ and $\strab$ the vertex $\frownie$ is not reached.
Since strategies $\tau'$ and $\xi$ form a \emph{witness} to 
weak fair alternating simulation, it follows that if 
the run $\rho(w,w',\tau,\tau',\xi,\xi')$ is fair, 
then $\rho'(w,w',\tau,\tau',\xi,\xi')$ is fair, and then by 
Lemma \ref{lemma:alt_fair_correspondence} it follows 
that the play given $\straa$ and $\strab$ satisfies the parity objective.
It follows that against the strategy $\straa$ of player~1, 
the strategy $\strab$ is winning for player~2.
Thus it follows that we have $\preceq_{\fairalt}^{\weak}\subseteq \Win_2$.  
\end{enumerate}
The desired result follows.
\hfill\qed
\end{proof}

\begin{lemma}\label{lem_fair_alt_size}
For the game graph constructed for fair alternating simulation we have 
$|V_1| + |V_2| \leq  O(|W|\cdot |W'|\cdot |A_1|\cdot |A_1'|)$; and 
$|E| \leq O(|W|\cdot |W'|\cdot |A_1|\cdot (|A_1'|\cdot |A_2'| + |A_2|))$.
\end{lemma}
\begin{proof}
We have $|\Succ(K)| \leq |W| \cdot |A_1|$ and $|\Succ(K')|\leq |W'| \cdot |A_1'|$. 
Hence we have
\[
|V_1| \le |W \times W'| + |\Succ(K)\times \Succ(K')| + 1 \le 
|W|\cdot |W'| + (|W|\cdot |A_1|)\cdot (|W'|\cdot |A_1'|) + 1 \le 
 O(|W|\cdot |W'|\cdot |A_1|\cdot |A_1'|);
\]
and
\[
|V_2| = 2\cdot |\Succ(K)\times W'| \le 2\cdot (|W|\cdot |A_1|)\cdot |W'|
\]
Thus we have the result for the vertex size.
We now obtain the bound on edges.
We have $|E| = |E_1| + |E_2| + |E_3| + |E_4^1| + |E_4^2| + |E_5|$, and we obtain bound for them
below:
\[
|E_1| \le \sum_{w' \in W'} \sum_{w \in W} |P_1(w)|\le |W'|\cdot |W|\cdot |A_1|
\]
\[
|E_2| = \sum_{T \in \Succ(K)} \sum_{w' \in W'} |P_1'(w')|
\le |\Succ(K)|\cdot |W'|\cdot |A_1'|
= |W|\cdot |W'|\cdot |A_1|\cdot |A_1'|
\]
\[
|E_3| = \sum_{T \in \Succ(K)} \sum_{T' \in \Succ(K')} |T'|
\le |\Succ(K)|\cdot |\Succ(K')|\cdot |A_2'| \le 
|W|\cdot |W'|\cdot |A_1|\cdot |A_1'|\cdot |A_2'|
\]
where for the first inequality above we used the fact that $|T'| \le |A_2'|$;
\[
|E_4^1| = \sum_{r' \in W'} \sum_{T \in \Succ(K)} |T|
\le |W'|\cdot |\Succ(K)|\cdot |A_2|
\le |W'|\cdot |W|\cdot |A_1|\cdot |A_2|
\]
where for the first inequality above we used that $|T| \le |A_2|$; 
\[
|E_4^2| \le \sum_{r' \in W'} \sum_{T \in \Succ(K)} 1 
\le |W'|\cdot |\Succ(K)| \le |W'|\cdot |W|\cdot |A_1|;
\]
and finally $|E_5|=1$.
Hence we have $|E| = O(|W|\cdot |W'|\cdot |A_1|\cdot (|A_1'|\cdot |A_2'| + |A_2|))$.
\hfill\qed
\end{proof}  

The above lemma bounds the size of the game, and we require that 
the game graph can be constructed in time quadratic in the size of the 
game graph and in the following section we will present a more efficient
(than quadratic) construction of the game graph.  
Proposition~\ref{prop:alt_fair_sim_correctness},
along with the complexity to solve parity games with three priorities gives
us the following theorem. 
The result for fair simulation follows as a special case and the details 
are presented in the technical details appendix.

\begin{theorem}\label{thrm:alt_fair}
We have $\preceq_{\fairalt}^{\weak} = \preceq_{\fairalt}^{\strong}$, 
the relation $\preceq_{\fairalt}^{\strong}$ can be computed in time 
$O(|W|^2\cdot |W'|^2\cdot |A_1|^2\cdot |A_1'|\cdot (|A_1'|\cdot|A_2'| + |A_2|))$
for two fair ATSs $\mathcal{K}$ and $\mathcal{K}'$.
The fair simulation relation $\preceq_{\fair}$ can be computed in time 
$O(|W|\cdot |W'| \cdot (|W|\cdot |R'| + |W'|\cdot |R|))$ 
for two fair TSs $\mathcal{K}$ and $\mathcal{K}'$.
\end{theorem}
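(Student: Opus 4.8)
The plan is to derive both parts of Theorem~\ref{thrm:alt_fair} from the machinery already assembled. For the equality $\preceq_{\fairalt}^{\weak} = \preceq_{\fairalt}^{\strong}$ and the complexity bound on $\preceq_{\fairalt}^{\strong}$, I would first invoke Proposition~\ref{prop:alt_fair_sim_correctness}, which already gives $\Win_2 = \preceq_{\fairalt}^{\weak} = \preceq_{\fairalt}^{\strong}$; the equality is thus immediate. For the running time, the algorithm is: (i)~perform the quadratic-time B\"uchi-game preprocessing on $K$ to remove states from which Agent~1 cannot ensure fairness (as described just before the game construction); (ii)~build the game graph $G=((V,E),(V_1,V_2))$ with the three-priority parity objective $\Phi^*$; (iii)~compute $W_2(\Phi^*)$ using the parity-game algorithm; (iv)~read off $\Win_2$, which by Proposition~\ref{prop:alt_fair_sim_correctness} equals $\preceq_{\fairalt}^{\strong}$. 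By Theorem~\ref{thrm:determinacy}(2), step~(iii) costs $O(|V|\cdot|E|)$; plugging in the bounds $|V| = O(|W|\cdot|W'|\cdot|A_1|\cdot|A_1'|)$ and $|E| = O(|W|\cdot|W'|\cdot|A_1|\cdot(|A_1'|\cdot|A_2'| + |A_2|))$ from Lemma~\ref{lem_fair_alt_size} yields exactly
\[
O(|W|^2\cdot |W'|^2\cdot |A_1|^2\cdot |A_1'|\cdot (|A_1'|\cdot|A_2'| + |A_2|)),
\]
which dominates the preprocessing cost and the linear-time graph construction promised for the next section.

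For the specialization to fair simulation on TSs, I would use the observation from Section~\ref{sec:defn} that a TS is an ATS with $A_2 = \{\bot\}$ a singleton, and that in this degenerate case the weak and strong fair alternating simulations collapse to ordinary fair simulation $\preceq_{\fair}$ (noted after Definition~\ref{defn:strong_alt_fair_sim}). Substituting $|A_2| = |A_2'| = 1$ into the ATS bound gives $O(|W|^2\cdot|W'|^2\cdot|A_1|^2\cdot|A_1'|\cdot(|A_1'| + 1)) = O(|W|^2\cdot|W'|^2\cdot|A_1|^2\cdot|A_1'|^2)$. The remaining step is to rewrite this in terms of transition-relation sizes: since $|A_2| = 1$, each state $w$ has $|P_1(w)|$ outgoing transitions, so $|R| = \sum_w |P_1(w)|$, and $|\Succ(K)| \le |R|$ (each successor set is a singleton indexed by a transition). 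Redoing the edge count of Lemma~\ref{lem_fair_alt_size} with these finer bounds — in particular $|E_1| \le |R|$ after summing over $w'$, $|E_2| \le |\Succ(K)|\cdot|R'| \le |R|\cdot|R'|$ wait, more carefully $|E_2| = \sum_{T\in\Succ(K)}\sum_{w'}|P_1'(w')| \le |\Succ(K)|\cdot|R'| \le |R|\cdot$ hmm — I would instead bound $|V| = O(|W|\cdot|W'| + |R|\cdot|R'|)$ and the edges so that $|V|\cdot|E|$ becomes $O(|W|\cdot|W'|\cdot(|W|\cdot|R'| + |W'|\cdot|R|))$ as claimed; this amounts to a careful bookkeeping using $|\Succ(K)|\le\min(|W|\cdot|A_1|, |R|)$ and symmetrically for $K'$.

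The main obstacle is the last bookkeeping step: getting the fair-simulation bound into the precise stated form $O(|W|\cdot|W'|\cdot(|W|\cdot|R'| + |W'|\cdot|R|))$ rather than a looser quantity like $O((|W|\cdot|W'|)^2 + (|R|\cdot|R'|)^2)$ requires re-examining which products of $\Succ$-set sizes actually appear in $|V|\cdot|E|$ and using $|T| = 1$ for TSs to collapse the $\Succ(K)\times\Succ(K')$ factor in $V_1$ down to something linear in $|R|$ on each side. Everything else — the equality of the two notions, the parity-game complexity substitution, and the TS reduction — follows mechanically from the stated results. The full details of this edge-counting argument for the TS case I would defer to the technical appendix, as the theorem statement itself indicates.
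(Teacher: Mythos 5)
Your treatment of the first two claims is correct and is exactly the paper's route: the equality $\preceq_{\fairalt}^{\weak} = \preceq_{\fairalt}^{\strong}$ is immediate from Proposition~\ref{prop:alt_fair_sim_correctness}, and the ATS bound is $O(|V|\cdot|E|)$ with the sizes from Lemma~\ref{lem_fair_alt_size}, the preprocessing and linear-time graph construction being dominated. The problem is the TS specialization, where the step you defer as ``bookkeeping'' is precisely where your proposal breaks. The intermediate bound you commit to, $|V| = O(|W|\cdot|W'| + |R|\cdot|R'|)$, does not yield the claimed complexity: multiplying by $|E| = O(|W|\cdot|R'| + |W'|\cdot|R|)$ leaves a term of order $|R|\cdot|R'|\cdot|W|\cdot|R'|$, which is not dominated by $|W|\cdot|W'|\cdot(|W|\cdot|R'|+|W'|\cdot|R|)$ since in general $|R|\cdot|R'| \gg |W|\cdot|W'|$. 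The same issue afflicts your closing remark that the $\Succ(K)\times\Succ(K')$ block collapses to ``something linear in $|R|$ on each side'': a product of order $|R|\cdot|R'|$ is still too large. The observation you actually need is that for a TS ($|A_2|=1$) every successor set $\Succ(w,a)$ is a \emph{singleton}, so the number of distinct successor sets satisfies $|\Succ(K)| \le |W|$ (not merely $\le |R|$), and symmetrically $|\Succ(K')| \le |W'|$. This gives $|V_1| + |V_2| = O(|W|\cdot|W'|)$, and redoing the edge count with $|T|=|T'|=1$ gives $|E| = O(|W|\cdot|R'| + |W'|\cdot|R|)$, from which the stated bound follows. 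This is exactly the content of Lemma~\ref{lemm:alt-sim-ts} in the appendix for the non-fair case.

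It is also worth noting that the paper does not obtain the fair-simulation bound by specializing the general ATS game at all: the appendix gives a separate, direct game construction for fair TSs with $V_2 = W\times W'\times\set{\$}$ and edges read directly off $R$ and $R'$, together with Lemma~\ref{lem_fair_size} establishing $|V| = O(|W|\cdot|W'|)$ and $|E| = O(|W|\cdot|R'| + |W'|\cdot|R|)$. Your specialization strategy can be made to work and is arguably more uniform, but only once the singleton-successor-set observation above replaces the $|R|\cdot|R'|$ bound; as written, the proposal does not establish the third claim of the theorem.
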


\begin{remark}
We consider the complexity of fair simulation, and let 
$n=|W|=|W'|$ and $m=|R|=|R'|$.
The previous algorithm of~\cite{HKR97} requires time $O(n^6)$ and 
our algorithm requires time $O(n^3 \cdot m)$.
Since $m$ is at most $n^2$, our algorithm takes in worst case 
time $O(n^5)$ and in most practical cases we have $m=O(n)$ and 
then our algorithm requires $O(n^4)$ time as compared to the 
previous known $O(n^6)$ algorithm.
\end{remark}

\begin{comment}
\subsubsection{Reduction Complexity}
In a similar way as done in \cite{paper:alternating_simulation}, the reduction can be done in $\mathcal{O}(|W|*\sum\limits_{w \in W} |P_1(w)| + |W'|*\sum\limits_{w' \in W'} |P'_1(w')|) + \mathcal{O}(|V|+|E|)$. Since the reduction complexity time is less than that of find the set of winning states, the reduction complexity is insignificant.\\

Thus the overall complexity of the algorithm is $\mathcal{O}((|V_1|+|V_2|)*|E|) = \mathcal{O}(|W|*|W'|*|A_1|*|A_1'|*|W|*|W'|*|A_1|*(|A_1'|*|A_2'| + |A_2|)) = \mathcal{O}(|W|^2*|W'|^2*|A_1|^2*|A_1'|*(|A_1'|*|A_2'| + |A_2|))$.
\end{comment}

\newcommand{\prev}{\mathit{prev}}
\newcommand{\BT}{\mathsf{BT}}
\newcommand{\Lf}{\mathsf{Lf}}
\newcommand{\Ar}{\mathsf{Ar}}
\newcommand{\GAr}{\mathsf{GAr}}

\newcommand{\simstr}{\mathsf{sim}}
\newcommand{\countstr}{\mathsf{count}}
\newcommand{\remove}{\mathsf{remove}}
\newcommand{\pre}{\mathsf{Pre}}
\newcommand{\post}{\mathsf{Post}}
\newcommand{\prevsim}{\mathsf{prevsim}}
\newcommand{\inv}{\mathsf{inv}}

\section{Alternating Simulation}
In this section we will present two algorithms to compute the maximum 
alternating simulation relation for two ATSs $K$ and $K'$. 
The first algorithm for the problem was presented in~\cite{AHKV98} and 
we refer to the algorithm as the basic algorithm.
The basic algorithm iteratively consideres pairs of states and examines 
if they are already witnessed to be not in the alternating simulation 
relation, removes them and continues until a fix-point is reached. 
The algorithm is described as Algorithm~\ref{algo:basic} (see Theorem~3
of~\cite{AHKV98}).
The correctness of the basic algorithm was shown in~\cite{AHKV98}, and 
the time complexity of the algorithm is $O(|W|^2\cdot |W'|^2\cdot |A_1|\cdot |A_1'|\cdot |A_2|\cdot |A_2'|)$:
(i)~time take by \emph{If} condition is $O(|A_1|\cdot |A_1'|\cdot |A_2|\cdot |A_2'|)$;
(ii)~time taken by the nested \emph{For} loops is $O(|W|\cdot |W'|)$; and
(iii)~the maximum number of iterations of the \emph{While} loop is $O(|W|\cdot |W'|)$.

\begin{algorithm*}[t]
\caption{Basic Algorithm}
\label{algo:basic}
{
\begin{tabbing}
aaa \= aaa \= aaa \= aaa \= aaa \= aaa \= aaa \= aaa \kill
\> {\bf Input:} $K= (\Sigma,W,\whw,A_1,A_2,P_1,P_2,L,\delta)$, $K'= (\Sigma,W',\whw',A_1',A_2',P_1',P_2',L',\delta')$. \\
\>   {\bf Output:} $\preceq_{\altsim}$. \\ 

\> 0. $\preceq^{\prev} \gets W \times W'$; $\preceq \gets \set{ (w,w') \mid w \in W, w' \in W', L(w) = L'(w') }$; \\ 
\> 1. {\bf while ($\preceq^{\prev} \ne \preceq$) } \\ 
\>\> 1.1 $\ \ \preceq^{\prev} \gets \preceq$ \\
\>\> 1.2 \ \ {\bf for all   $w \in W$, $w' \in W'$} \\
\>\>\>\>  {\bf if  (}$w \preceq^{\prev} w'$ and 
$\exists a \in P_1(w) \cdot \forall a' \in P'_1(w') \cdot \exists b' \in P'_2(w') \cdot \forall b \in P_2(w) \cdot \delta(w,a,b) \npreceq^{\prev} \delta(w',a',b')${\bf ), then} \\
\>\>\>\>\>  $\preceq \gets \preceq \setminus \set{(w,w')}$ \\
\> 2. {\bf return} $\preceq$.  
\end{tabbing}
}
\end{algorithm*}

\begin{comment}
\begin{algorithm}
\caption{Basic Algorithm}
\label{algo:basic}
\begin{algorithmic}
\renewcommand{\algorithmicrequire}{\textbf{Input:}}
\renewcommand{\algorithmicensure}{\textbf{Output:}}
  \REQUIRE $K= (\Sigma,W,\whw,A_1,A_2,P_1,P_2,L,\delta)$, $K'= (\Sigma,W',\whw',A_1',A_2',P_1',P_2',L',\delta')$
  \ENSURE $\preceq_{\altsim}$ 
  \STATE $\preceq^{\prev} \gets W \times W'$; $\preceq \gets \set{ (w,w') \mid w \in W, w' \in W', L(w) = L'(w') }$
  \WHILE {$\preceq^{\prev} \ne \preceq$}
  \STATE $\preceq^{\prev} \gets \preceq$
  \FORALL {$w \in W$, $w' \in W'$ }
      \IF {$w \preceq^{\prev} w'$}
        \IF{$ \exists a \in P_1(w) \cdot \forall a' \in P'_1(w') \cdot \exists b' \in P'_2(w') \cdot \forall b \in P_2(w) \cdot \delta(w,a,b) \npreceq^{\prev} \delta(w',a',b')$}
           \STATE $\preceq \gets \preceq \setminus \set{(w,w')}$
        \ENDIF
      \ENDIF
    \ENDFOR
  \ENDWHILE
 \STATE {\bf RETURN} $\preceq$
\end{algorithmic}
\end{algorithm}
\end{comment}

\subsection{Improved Algorithm Through Games}\label{sec:alt-sim-1}
In this section we present an improved algorithm for alternating simulation 
by reduction to reachability-safety games.

\smallskip\noindent{\bf Game construction.} 
Given two ATSs $K= (\Sigma,W,\whw,A_1,A_2,P_1,P_2,L,\delta)$ and 
$K'= (\Sigma,W',\whw',A_1',A_2',P_1',P_2',L',\delta')$, we construct a game graph 
$G=((V,E),(V_1,V_2))$ as follows:
\begin{itemize}
\item \emph{Player~1 vertices:} $V_1 = (W \times W') \cup \big(\Succ(K) \times \Succ(K')\big)$;
\item \emph{Player~2 vertices:} $V_2 = \Succ(K)\times W' \times \{\#, \$\}$;
\item \emph{Edges:} The edge set $E$ is specified as the following union: $E = E_1 \cup E_2 \cup E_3 \cup E_4$
        \begin{eqnarray*}
          E_1 &=& \set{ (\structure{w,w'}, \structure{\Succ(w,a),w',\#}) \mid w \in W, w' \in W', a \in P_1(w)} \\
          E_2 &=& \set{ (\structure{T,w',\#}, \structure{T,\Succ(w',a')}) \mid T \in \Succ(K), w' \in W', a' \in P_1'(w')}\\
          E_3 &=& \set{ (\structure{T,T'}, \structure{T,r',\$}) \mid T \in \Succ(K), T' \in \Succ(K'), r' \in T'}\\
          E_4 &=& \set{ (\structure{T,r',\$}, \structure{r,r'}) \mid T \in \Succ(K), r' \in W', r \in T}
        \end{eqnarray*}
\end{itemize}
Let $T = \set{ \structure{w,w'} \mid  L(w) \ne L'(w') }$ be the state pairs that does not match by the 
labeling function, and let $F=V \setminus T$.
The objective for player~1 is to reach $T$ (i.e., $\Reach{T}$) and 
the objective for player~2 is the safety objective $\Safe{F}$.
In the following proposition we establish the connection of the winning set for 
player~2 and $\preceq_{\altsim}$.

\begin{proposition}\label{prop:alt-sim}
Let $\Win_2= \set{(w,w') \mid w \in W, w' \in W', \structure{w,w'} \in W_2(\Safe{F}) \text{ i.e., is a winning vertex for player 2}}$.
Then we have $\Win_2=\preceq_{\altsim}$.
\end{proposition}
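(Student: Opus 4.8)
The plan is to prove the two inclusions $\Win_2 \subseteq \preceq_{\altsim}$ and $\preceq_{\altsim} \subseteq \Win_2$ separately, exploiting determinacy of reachability-safety games (Theorem~\ref{thrm:determinacy}) and the existence of memoryless winning strategies.

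\textbf{First inclusion: $\Win_2 \subseteq \preceq_{\altsim}$.} The plan is to show that the relation $\Win_2$ satisfies the two defining conditions of an alternating simulation; since $\preceq_{\altsim}$ is the \emph{maximum} such relation, this gives $\Win_2 \subseteq \preceq_{\altsim}$. So fix $(w,w') \in \Win_2$. The label condition $L(w)=L'(w')$ follows because $\structure{w,w'} \in W_2(\Safe{F})$ means in particular $\structure{w,w'} \in F$, i.e.\ $\structure{w,w'} \notin T$. For the second condition, fix a memoryless winning strategy $\strab^m$ for player~2 from $\structure{w,w'}$. Given $a \in P_1(w)$, player~1's edge in $E_1$ leads to $\structure{\Succ(w,a),w',\#}$, which must still be in $W_2(\Safe{F})$; the strategy $\strab^m$ there picks some $\structure{\Succ(w,a),\Succ(w',a')}$ via $E_2$, and I read off the required $a' \in P_1'(w')$. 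Now given any $b' \in P_2'(w')$, i.e.\ any $r' \in \Succ(w',a')$, player~1's $E_3$-edge leads to $\structure{\Succ(w,a),r',\$}$, again winning for player~2; $\strab^m$ there picks some $\structure{r,r'}$ with $r \in \Succ(w,a)$, i.e.\ $r=\delta(w,a,b)$ for some $b \in P_2(w)$, and this is the required $b$. Finally $\structure{r,r'} \in W_2(\Safe{F})$, so $(\delta(w,a,b),\delta'(w',a',b')) \in \Win_2$, closing the simulation condition.

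\textbf{Second inclusion: $\preceq_{\altsim} \subseteq \Win_2$.} The plan is the converse construction: take $(w,w') \in \preceq_{\altsim}$ and build a memoryless player~2 strategy on the relevant subgraph that keeps the play inside $F$ forever. The key invariant is that the play stays within vertices $\structure{u,u'}$ with $(u,u') \in \preceq_{\altsim}$ (hence in $F$, by the label condition of alternating simulation) together with the intermediate $\#$/$\$$ vertices generated along the way. Concretely, at a $\#$-vertex $\structure{\Succ(u,a),u',\#}$ reached from $\structure{u,u'} \in \preceq_{\altsim}$ via action $a \in P_1(u)$, the simulation condition gives an $a' \in P_1'(u')$ witnessing $a$; let player~2 move to $\structure{\Succ(u,a),\Succ(u',a')}$. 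At the following $\$$-vertex $\structure{\Succ(u,a),r',\$}$ with $r'=\delta'(u',a',b')$ for some $b' \in P_2'(u')$, the simulation condition (for this particular $a'$ and $b'$) yields $b \in P_2(u)$ with $(\delta(u,a,b),\delta'(u',a',b')) \in \preceq_{\altsim}$; let player~2 move to $\structure{\delta(u,a,b),r'}$. This maintains the invariant, so every play consistent with this strategy from $\structure{w,w'}$ stays in $F$; hence $\structure{w,w'} \in W_2(\Safe{F})$ and $(w,w') \in \Win_2$. (One can either make this strategy genuinely memoryless by fixing, once and for all, a choice of witness $a'$ for each pair and action and a choice of witness $b$ for each relevant tuple, or simply invoke determinacy: exhibiting a winning strategy of any kind suffices.)

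The routine part is the bookkeeping matching successor sets $\Succ(\cdot,\cdot)$ and $\$$-vertices to the $\forall a\, \exists a'\, \forall b'\, \exists b$ quantifier structure of the definition of alternating simulation; the edges $E_1,\dots,E_4$ were designed precisely to mirror these four quantifiers, so each quantifier alternation corresponds to one move in the game, with player~1 owning the $\forall a$ and $\forall b'$ choices (vertices in $W\times W'$ and in $\Succ(K)\times\Succ(K')$) and player~2 owning the $\exists a'$ and $\exists b$ choices (vertices with marker $\#$ and $\$$ respectively). I expect the only mild subtlety is the first inclusion's use of the fact that a memoryless winning strategy for a safety objective never leaves $W_2(\Safe{F})$ — equivalently, $W_2(\Safe{F})$ is a trap for player~1 — which is exactly what lets me conclude $\structure{r,r'} \in W_2(\Safe{F})$ and thus keep the relation $\Win_2$ closed under the simulation step; this is immediate from the semantics of $\Safe{F}$ and the definition of winning.
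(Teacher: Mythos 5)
Your proof is correct and follows essentially the same route as the paper's: both directions are established by matching the four quantifier alternations of the alternating-simulation condition to the four edge layers $E_1,\dots,E_4$, using for the first inclusion that the player-2 winning set of the safety game is closed under the game dynamics, and for the second that $\preceq_{\altsim}$ itself yields a player-2 strategy confining the play to $F$. The extra remarks on memorylessness and determinacy are harmless refinements of the same argument.
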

\begin{proof}
We prove the result by proving two inclusions: (i)~$\Win_2\subseteq \preceq_{\altsim}$ and (ii)~$\preceq_{\altsim} \subseteq \Win_2$.
\renewcommand{\labelitemi}{$\bullet$}
\begin{enumerate}
\item \emph{(First inclusion: $\Win_2 \subseteq \preceq_{\altsim}$).} 
We show that $\Win_2$ is an alternating simulation relation. 
Let $\structure{w,w'}$ be a winning vertex in $\Win_2$ for player 2. 
Since the set of winning vertices is disjoint from 
$T = \set{ \structure{w,w'} \mid L(w) \ne L'(w') }$, 
we can conclude that $L(w) = L'(w')$. 
Thus, we only need to show that for all $(w,w')\in \Win_2$ we have
\begin{equation*}
\forall a \in P_1(w) \cdot \exists a' \in P_1'(w') \cdot \forall b' \in P_2'(w') \cdot \exists b \in P_2(w) \cdot (\delta(w,a,b),\delta'(w',a',b')) \in \Win_2 
\end{equation*}
We have the following analysis:
\begin{itemize}
\item Since $\structure{w,w'}$ is a player-1 vertex, all transitions of player 1 to $\structure{\Succ(w,a),w',\#}$ must be a winning vertex for player~2 for all $a \in P_1(q)$.
\item Since $\structure{\Succ(w,a),w',\#}$ is a player-2 vertex and is a winning vertex for player~2, there exists a transition, that is, there exists $a' \in P'_1(w')$, 
such that $\structure{\Succ(w,a),\Succ(w',a')}$ is a winning vertex for player 2.
\item Since $\structure{\Succ(w,a),\Succ(w',a')}$ is a player-1 vertex and is a winning vertex for player 2, for all transitions, that is, for all $b' \in P'_2(w')$, 
$\structure{\Succ(w,a),\delta'(w',a',b'),\$}$ is a winning vertex for player 2.
\item Since $\structure{\Succ(w,a),\delta'(q',a',b'),\$}$ is a player-2 vertex and is a winning vertex for player 2, there exists a transition, that is, there exists $b \in P_2(w)$
such that $\structure{\delta(w,a,b),\delta'(w',a',b')}$ is a winning vertex for player 2.
\end{itemize}
It follows that $\Win_2$ is an alternating simulation relation and hence $\Win_2 \subseteq \preceq_{\altsim}$.

\item\emph{(Second inclusion: $\preceq_{\altsim} \subseteq \Win_2$).} 
We need to show that $\structure{w,w'}$ is a winning vertex for player 2, for all $(w,w') \in \preceq_{\altsim}$. 
Since $(w,w') \in \preceq_{\altsim}$, it follows that $L(w)=L'(w')$. Hence $\preceq_{\altsim}$ is disjoint from 
$T =\set{ \structure{w,w'} \mid L(w) \ne L'(w') }$. 
Thus, it suffices to show that starting from $\structure{w,w'}$ the player~2 can force that the game never reaches $T$. 
We know that for all $(w,w') \in \preceq_{\altsim}$ we have 
\begin{equation*}
\forall a \in P_1(w) \cdot \exists a' \in P_1'(w') \cdot \forall b' \in P_2'(w') \cdot \exists b \in P_2(w) \cdot 
(\delta(w,a,b),\delta'(w',a',b')) \in \preceq_{\altsim}
\end{equation*}
Thus, starting from all vertices $\structure{w,w'}$ such that $(w,w') \in \preceq_{\altsim}$ the player 2 can force that the game reaches some 
$\structure{r,r'}$ such that $(r,r') \in \preceq_{\altsim}$, that is, player 2 can force that the game always stays in states in $F=V \setminus T$
(as $\preceq_{\altsim} \cap T = \emptyset$).
Hence $\preceq_{\altsim} \subseteq \Win_2$.
\end{enumerate}
The desired result follows.
\hfill\qed
\end{proof}

The algorithmic analysis will be completed in two steps: (1)~estimating the size 
of the game graph; and (2)~analyzing the complexity to construct the game graph 
from the ATSs.

\begin{lemma}\label{lemm:alt-sim}
For the game graph constructed for alternating simulation, we have 
$|V_1| + |V_2| \leq O(|W|\cdot |W'|\cdot |A_1|\cdot |A_1'|)$ and 
$|E| \leq O(|W|\cdot |W'|\cdot |A_1|\cdot (|A_1'|\cdot |A_2'|+|A_2|))$.
\end{lemma} 
\begin{proof}
We have 
\[
|V_1| = |W \times W'| + |\Succ(K) \times \Succ(K')| \le |W|\cdot |W'| + (|W|\cdot |A_1|)\cdot (|W'|\cdot |A_1'|) 
=O(|W|\cdot |W|' \cdot |A_1| \cdot |A_1'|);
\]
\[
|V_2| = 2\cdot |\Succ(K) \times W'| 
\le 2 \cdot (|W|\cdot |A_1|)\cdot |W'| = 2\cdot |W|\cdot |W'|\cdot |A_1|
\]
The bound for $|V_1| + |V_2|$ follows. We now consider the bound for 
the size of $E$.
We have $|E| = |E_1| + |E_2| + |E_3| + |E_4|$, and we obtain bounds for
them below:
\[
|E_1| = \sum_{w' \in W'} \sum_{w \in W} |P_1(w)| \le |W'|\cdot |W|\cdot |A_1|
\]
\[
|E_2| = \sum_{T \in \Succ(K)} \sum_{w' \in W'} |P_1'(w')| \le |\Succ(K)|\cdot |W'|\cdot |A_1'|
\le |W|\cdot |W'|\cdot |A_1| \cdot |A_1'|
\]
\[
|E_3| = \sum_{T \in \Succ(K)} \sum_{T' \in \Succ(K')} |T'|
\le |\Succ(K)|\cdot |\Succ(K')|\cdot |A_2'| \le |W|\cdot |W'|\cdot |A_1|\cdot |A_1'|\cdot |A_2'|
\]
\[
|E_4| = \sum_{r' \in W'} \sum_{T \in \Succ(K)} |T|
\le |W'|\cdot |\Succ(K)|\cdot |A_2| \le |W'|\cdot |W|\cdot |A_1|\cdot |A_2|
\]
where in the bound for $E_3$ we used $|T'| \le |A_2|$ and in the bound for $E_4$ we used 
$|T| \le |A_2|$.
It follows that $|E| = O(|W|\cdot |W'|\cdot |A_1|\cdot (|A_1'|\cdot |A_2'|+|A_2|))$,
and the desired result follows.
\hfill\qed
\end{proof}

\smallskip\noindent{\bf Game graph construction complexity.} 
We now show that the game graph can be constructed in time linear 
in the size of the game graph.
The data strucutre for the game graph is as follows: we map every 
vertex in $V_1 \cup V_2$ to a unique integer, and construct the list 
of edges. 
Given this data structure for the game graph, the winning sets for 
reachability and safety objectives can be computed in linear time~\cite{Beeri,Immerman81}.
We now present the details of the construction of the game graph 
data structure.

\smallskip\noindent{\em Basic requirements.} We start with some basic facts.
For two sets $A$ and $B$, if we have  two bijective functions 
$f_A : A \leftrightarrow \set{ 0, \dots , |A|-1 }$ and 
$f_B : B \leftrightarrow \set{ 0, \dots , |B|-1 }$, 
then we can assign a unique integer to elements of $A \times B$ in 
time $O(|A|\cdot |B|)$. 
Since it is easy to construct bijective functions for $W$ and $W'$, 
we need to construct such bijective functions for $\Succ(K)$ and 
$\Succ(K')$ to ensure that every vertex has a unique integer.
We will present data structure that would achieve the following: 
(i)~construct bijective function $f_K:\Succ(K)\leftrightarrow  \set{0,\ldots,|\Succ(K)|-1}$; 
(ii)~construct function $h_K: W\times A_1 \to \set{0,\ldots, |\Succ(K)|-1}$ such that 
for all $w \in W$ and $a \in P_1(w)$ we have $h_K((w,a))=f_K(\Succ(w,a))$, i.e., it gives
the unique number for the successor set of $w$ and action $a$;
(iii)~construct function $g_K:\set{0,1,\ldots, |\Succ(K)|-1} \to 2^W$ such that for all 
$T \in \Succ(K)$ we have $g_K(f_K(T))$ is the list of states in $T$.
We will construct the same for $K'$, and also ensure that for all $T$ we 
compute $g_K(f_K(T))$ in time proportional to the size of $T$.
We first argue how the above functions are sufficient to construct every edge in 
constant time: 
(i)~edges in $E_1$ can be constructed by considering state pairs $\structure{w,w'}$, 
actions $a \in P_1(w)$, and with the function $h_K((w,a))$ we add the required edge,
and the result for edges $E_2$ is similar with the function $h_{K'}$; 
(ii)~edges in $E_3$ and $E_4$ are generated using the function $g_K$ that gives 
the list of states for $g_K(f_K(T))$ in time proportional to the size of $T$.
Hence every edge can be generated in constant time, given the functions, 
and it follows that with the above functions the game construction is achieved
in linear time. 
We now present the data structure to support the above functions.

\smallskip\noindent{\em Binary tree data structure.} 
Observe that $\Succ(K)$ is a set such that each element is a successor set 
(i.e., elements are set of states).
Without efficient data structure the requirements for the functions 
$f_K,h_K,$ and $g_K$ cannot be achieved.
The data structure we use is a \emph{binary tree data structure}.
We assume that states in $W$ are uniquely numbered from $1$ to $|W|$
Consider a binary tree, such that every leaf has depth $|W|$, i.e., the length of the path 
from root to a leaf is $|W|$. 
Each path from the root to a leaf represents a set --- 
every path consists of a $|W|$ length sequence of \emph{left} and \emph{right} choices. 
Consider a path $\pi$ in the binary tree, and the path $\pi$ represents a subset $W_\pi$ of $W$ 
as follows: if the $i$-th step of $\pi$ is \emph{left}, then $w_i \notin W_\pi$, 
if the $i$-th step is \emph{right}, then $w_i \in W_\pi$. 
Thus, $\Succ(K)$ is the collection of all sets represented by paths (from root to leaves) in this tree.
We have several steps and we describe them below.
\begin{enumerate}
\item \emph{Creation of binary tree.} The binary tree $\BT$ is created as follows.
Initially the tree $\BT$ is empty.
For all $w \in W$ and all $a \in P_1(w)$ we generate the set $\Succ((w,a))$ as a
Boolean array $\Ar$ of length $|W|$ such that $\Ar[i]=1$ if $w_i \in \Succ(w,a)$ and~0
otherwise. 
We use the array $\Ar$ to add the set $\Succ((w,a))$ to $\BT$ as follows: we 
proceed from the root, if $\Ar[0]=0$ we add left edge, else the right edge, 
and proceed with $\Ar[1]$ and so on.
For every $w\in W$ and $a\in P_1(w)$, the array $\Ar$ is generated by going 
over actions in $P_2(w)$, and the addition of the set $\Succ(w,a)$ to the 
tree is achieved in $O(|W|)$ time. 
The initialization of array $\Ar$ also requires time $O(|W|)$.
Hence the total time required is $O(|W|\cdot |A_1| \cdot(|W| +|A_2|))$.
The tree has at most $|W|\cdot |A_1|$ leaves and hence the size of 
the tree is $O(|W|^2\cdot |A_1|)$.

\item \emph{The functions $f_K$, $g_K$ and $h_K$.} 
Let $\Lf$ denote the leaves of the tree $\BT$, and note that every leaf 
represents an element of $\Succ(K)$.
We do a DFT (depth-first traversal) of the tree $\BT$ and assign every leaf 
the number according to the order of leaves in which it appears in the DFT.
Hence the function $f_K$ is constructed in time $O(|W|^2\cdot |A_1|)$.
Moreover, when we construct the function $f_K$, we create an array $\GAr$ of lists 
for the function $g_K$. 
If a leaf is assigned number $i$ by $f_K$, we go from the leaf to the root
and find the set $T\in \Succ(K)$ that the leaf represents and $\GAr[i]$ is 
the list of states in $T$.
Hence the construction of $g_K$ takes time at most $O(|W|\cdot |A_1|\cdot |W|)$.
The function $h_K$ is stored as a two-dimensional array of integers with rows
indexed by numbers from $0$ to $|W|-1$, and columns by numbers $0$ to $|A_1|-1$.
For a state $w$ and action $a$, we generate the Boolean array $\Ar$, and 
use the array $\Ar$ to traverse $\BT$, obtain the leaf for $\Succ((w,a))$, 
and assign $h_K((w,a)) =f_K(\Succ(w,a))$.
It follows that $h_K$ is generated in time $O(|W|\cdot |A_1| \cdot(|W| +|A_2|))$.
  
\end{enumerate}
From the above graph construction, Proposition~\ref{prop:alt-sim},
Lemma~\ref{lemm:alt-sim}, and the linear time algorithms to solve games with 
reachability and safety objectives we have the following result for 
computing alternating simulation.

\begin{theorem}\label{thrm:alt-sim-1}
The relation $\preceq_{\altsim}$ can be computed in time 
$O(|W|\cdot |W'|\cdot |A_1|\cdot (|A_1'|\cdot |A_2'|+|A_2|) + |W|^2\cdot |A_1| 
+ |W'|^2 \cdot |A_1'|)$
for two ATSs $K$ and $K'$.
The relation $\preceq_{\altsim}$ can be computed in time 
$O(|W|\cdot |R'| + |W'|\cdot |R|)$ 
for two TSs $K$ and $K'$.
\end{theorem}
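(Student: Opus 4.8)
The plan is to assemble the theorem from three ingredients already in place: the characterization $\preceq_{\altsim} = \Win_2 = \set{(w,w') \mid \structure{w,w'}\in W_2(\Safe{F})}$ from Proposition~\ref{prop:alt-sim}, the size bounds on the game graph from Lemma~\ref{lemm:alt-sim}, and the linear-time solvability of safety games from Theorem~\ref{thrm:determinacy}. Thus the whole task reduces to (a)~building the game graph $G=((V,E),(V_1,V_2))$ together with the safe set $F$, and (b)~running the linear-time safety-game algorithm on it and reading off the winning set for player~2 restricted to vertices of the form $\structure{w,w'}$.

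For the running-time accounting I would split the cost as (construction of $G$) $+$ $O(|V|+|E|)$, the latter by the second part of Theorem~\ref{thrm:determinacy} together with Lemma~\ref{lemm:alt-sim} being $O(|W|\cdot|W'|\cdot|A_1|\cdot(|A_1'|\cdot|A_2'|+|A_2|))$. For the construction, I would first give every vertex a unique integer: this is immediate for $W\times W'$, and for $\Succ(K)\times\Succ(K')$ it follows once I have bijections $f_K:\Succ(K)\leftrightarrow\set{0,\dots,|\Succ(K)|-1}$ and $f_{K'}$, which is exactly what the binary-tree data structure $\BT$ with depth-first leaf numbering provides, along with $h_K$ (mapping $(w,a)$ to the integer of $\Succ(w,a)$) and $g_K$ (mapping an integer back to the list of member states). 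Building these for $K$ costs $O(|W|\cdot|A_1|\cdot(|W|+|A_2|))$ and for $K'$ costs $O(|W'|\cdot|A_1'|\cdot(|W'|+|A_2'|))$; with $f_K,h_K,g_K$ and their primed versions in hand, each edge of $E_1,E_2$ is emitted in $O(1)$ and each edge of $E_3,E_4$ in time proportional to the size of the relevant successor set, so the edge lists are produced in $O(|E|)$. Summing, the $O(|W|\cdot|A_1|\cdot|A_2|)$ and $O(|W'|\cdot|A_1'|\cdot|A_2'|)$ parts are absorbed into $O(|W|\cdot|W'|\cdot|A_1|\cdot(|A_1'|\cdot|A_2'|+|A_2|))$, leaving the residual $|W|^2\cdot|A_1|$ and $|W'|^2\cdot|A_1'|$ from the tree construction, which is exactly the first bound of the theorem.

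For the TS specialization I would note that $A_2=A_2'=\set{\bot}$, so every $\Succ(w,a)$ is a singleton; hence $\Succ(K)$ (resp.\ $\Succ(K')$) can be identified with a subset of $W$ (resp.\ $W'$), no $\BT$ is needed, and the integer labelling and edge lists are built directly in linear time. Re-running the edge count of Lemma~\ref{lemm:alt-sim} with $|T|=|T'|=1$, $\sum_{w}|P_1(w)|=|R|$, $\sum_{w'}|P_1'(w')|=|R'|$ gives $|E_1|=O(|W'|\cdot|R|)$, $|E_2|=O(|W|\cdot|R'|)$, $|E_3|=O(|W|\cdot|W'|)$, $|E_4|=O(|W|\cdot|W'|)$, and $|V|=O(|W|\cdot|W'|)$; using $|W|\le|R|$ and $|W'|\le|R'|$ (every state has an outgoing edge) all of these are bounded by $O(|W|\cdot|R'|+|W'|\cdot|R|)$, which with the linear-time safety-game solver yields the second bound.

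The main obstacle is the linear-time construction of $G$, specifically ensuring that the successor-set vertices are not overcounted: two distinct Agent~1 actions $a\neq\tilde a$ at a state with $\Succ(w,a)=\Succ(w,\tilde a)$ must map to the \emph{same} vertex of $\Succ(K)$, and one must be able to recover the member states of a successor-set vertex in order to emit the edges $E_3$ and $E_4$. This is precisely what the binary-tree data structure with depth-first leaf numbering is designed to achieve within budget (tree size $O(|W|^2\cdot|A_1|)$, enumeration of $\Succ(w,a)$ over $b\in P_2(w)$ in $O(|A_2|)$ per action); once this is granted, the rest is routine bookkeeping of the edge counts and an invocation of Theorem~\ref{thrm:determinacy}.
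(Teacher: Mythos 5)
Your proposal is correct and follows essentially the same route as the paper: the reduction of Proposition~\ref{prop:alt-sim}, the size bounds of Lemma~\ref{lemm:alt-sim}, the binary-tree data structure with the functions $f_K,h_K,g_K$ to build the game graph in linear time plus the $O(|W|^2\cdot|A_1| + |W'|^2\cdot|A_1'|)$ overhead, and the linear-time safety-game solver; the TS case is likewise handled by specializing the edge counts with singleton successor sets, exactly as in the paper's appendix. The only cosmetic difference is that you express the TS edge bounds directly via $\sum_w|P_1(w)|=|R|$ rather than via $|R|=|W|\cdot|A_1|$, which changes nothing of substance.
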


The result for the special case of TSs is obtained by noticing that 
for TSs we have both $|V|$ and $|E|$ at most $|W|\cdot |R'| +|W'| \cdot |R|$ 
(see technical details appendix for details), and our algorithm 
matches the complexity of 
the best known algorithm of~\cite{HHK95} for simulation for transition systems.
Let us denote by $n=|W|$ and $n'=|W'|$ the size of the state spaces, 
and by $m=|W| \cdot |A_1|\cdot |A_2|$ and 
$m'=|W'| \cdot |A_1'|\cdot |A_2'|$  the size of the transition 
relations. 
Then the basic algorithm requires $O(n \cdot n' \cdot m \cdot m')$ time, 
whereas our algorithm requires at most $O(m \cdot m' + n\cdot m + n'\cdot m')$ 
time, and when $n=n'$ and $m=m'$, then the basic algorithm requires
$O((n\cdot m)^2)$ time and our algorithm takes $O(m^2)$ time.

\subsection{Iterative Algorithm}\label{subsec:iterative}
In this section we will present an iterative algorithm 
for alternating simulation.
For our algorithm we will first present a new and alternative
characterization of alternating simulation through successor
set simulation.

\newcommand{\ssim}{\approxeq}

\begin{definition}[Successor set simulation]\label{defn:new_alt_sim}
Given two ATSs $K= (\Sigma,W,\whw,A_1,A_2,P_1,P_2,L,\delta)$ and 
$K'= (\Sigma,W',\whw',A_1',A_2',P_1',P_2',L',\delta')$,
a relation $\ssim \subseteq W \times W'$ is a \emph{successor set simulation} 
from $K$ to $K'$, if there exists a companion relation 
$\ssim^S \subseteq \Succ(K') \times \Succ(K)$, such that the following 
conditions hold: 
 \begin{itemize}
  \item for all $(w,w') \in \ssim$ we have $L(w) = L'(w')$; 
  \item if $(w,w') \in \ssim$, then for all actions $a \in P_1(w)$, there 
  exists an action $a' \in P_1'(w')$ such that $(\Succ(w',a'),\Succ(w,a)) 
  \in \ssim^S$; and
  \item if $(T',T) \in \ssim^S$, then for all $r' \in T'$, there exists 
  $r \in T$ such that $(r,r') \in \ssim$.
  \end{itemize}
We denote by $\ssim^*$ the maximum successor set simulation.
\end{definition}

We now show that successor set simulation and alternating simulation coincide, 
and then present the iterative algorithm to compute the maximum successor set 
simulation $\ssim^*$.

\begin{lemma}\label{lemm:alt-charac}
The following assertions hold: 
(1)~Every successor set simulation is an alternating simulation, 
and every alternating simulation is a successor set simulation.
(2)~We have $\ssim^*=\preceq_{\altsim}$.
\end{lemma}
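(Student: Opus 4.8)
The plan is to prove the two inclusions of part (1) by exhibiting explicit relations, from which part (2) follows immediately by maximality. For the direction ``every successor set simulation is an alternating simulation,'' suppose $\ssim$ is a successor set simulation with companion $\ssim^S$. I claim $\ssim$ itself is an alternating simulation. The label condition is immediate. For the transition condition, fix $(w,w') \in \ssim$ and $a \in P_1(w)$. By the second clause of Definition~\ref{defn:new_alt_sim} there is $a' \in P_1'(w')$ with $(\Succ(w',a'),\Succ(w,a)) \in \ssim^S$. Now fix any $b' \in P_2'(w')$; then $r' := \delta'(w',a',b') \in \Succ(w',a')$, so by the third clause there is $r \in \Succ(w,a)$ with $(r,r') \in \ssim$. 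Since $r \in \Succ(w,a)$, there is $b \in P_2(w)$ with $r = \delta(w,a,b)$, and then $(\delta(w,a,b),\delta'(w',a',b')) = (r,r') \in \ssim$. This is exactly the alternating-simulation quantifier alternation, so $\ssim \subseteq \preceq_{\altsim}$; in particular $\ssim^* \subseteq \preceq_{\altsim}$.

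For the converse, let $S$ be an alternating simulation. I need to manufacture a companion relation. The natural choice is
\[
S^S = \set{ (\Succ(w',a'),\Succ(w,a)) \mid (w,w') \in S,\ a \in P_1(w),\ a' \in P_1'(w') \text{ is a witness for } a },
\]
where ``$a'$ is a witness for $a$'' means $\forall b' \in P_2'(w')\, \exists b \in P_2(w)\, (\delta(w,a,b),\delta'(w',a',b')) \in S$; such an $a'$ exists by the definition of alternating simulation. I then check the three clauses of Definition~\ref{defn:new_alt_sim} for the pair $(S, S^S)$. The label clause and the second clause hold directly by construction (for $(w,w') \in S$ and $a \in P_1(w)$, pick the witness $a'$, so $(\Succ(w',a'),\Succ(w,a)) \in S^S$). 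For the third clause, take $(T',T) \in S^S$, say $T' = \Succ(w',a')$ and $T = \Succ(w,a)$ with $a'$ a witness for $a$ at $(w,w')$; given $r' \in T'$, write $r' = \delta'(w',a',b')$ for some $b' \in P_2'(w')$, and by the witness property obtain $b \in P_2(w)$ with $(\delta(w,a,b),\delta'(w',a',b')) \in S$; set $r = \delta(w,a,b) \in \Succ(w,a) = T$. Thus $S$ is a successor set simulation, so $S \subseteq \ssim^*$; in particular $\preceq_{\altsim} \subseteq \ssim^*$.

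Combining, $\ssim^* = \preceq_{\altsim}$, which is part (2); and since $\preceq_{\altsim}$ is itself a successor set simulation (the converse direction applied to $S = \preceq_{\altsim}$) and also an alternating simulation, while $\ssim^*$ is both as well, part (1) is witnessed by the two constructions above. The one subtlety to be careful about --- and the only place the argument is not completely mechanical --- is the definition of $S^S$: a single successor set $T'$ may arise as $\Succ(w',a')$ for several different triples, and $T'$ may be ``witnessed'' relative to some of them and not others. The fix is to include $(\Succ(w',a'),\Succ(w,a))$ in $S^S$ for \emph{every} triple $(w,w',a)$ for which $a'$ happens to be a witness; the third clause above only uses the existence of \emph{one} such triple generating the pair $(T',T)$, so over-populating $S^S$ is harmless. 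No maximality or fixpoint reasoning beyond ``a union of simulations is a simulation'' (already noted in Section~\ref{sec:defn}) is needed.
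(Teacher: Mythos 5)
Your proposal is correct and follows essentially the same route as the paper: the same quantifier-chasing argument for "successor set simulation implies alternating simulation," and the same explicit companion relation (pairs $(\Succ(w',a'),\Succ(w,a))$ generated by witness actions $a'$) for the converse, with part (2) then following by maximality. The subtlety you flag about a successor set arising from several triples is handled the same way in the paper's proof, which also only uses the existence of one generating triple when verifying the third clause.
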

\begin{proof}
The second assertion is an easy consequence of the first one, and 
we prove inclusion in both directions to prove the first assertion.
\begin{itemize}
\item \emph{(Alternating simulation impiles successor set simulation).} 
Suppose $\preceq$ is an alternating simulation. 
We need to prove that $\preceq$ is also a successor set simulation. 
For this we will construct the witness companion relation 
$\ssim^S \subseteq \Succ(K')\times \Succ(K)$ to satisfy 
Definition~\ref{defn:new_alt_sim}.

We define
\begin{equation*}
\ssim^S = \left \{(\Succ(w',a'),\Succ(w,a)) \mid 
\begin{matrix} (w,w') \in \preceq 
\wedge a \in P_1(w) \wedge a' \in P'_1(w'). \\ 
\forall b' \in P'_2(w') \cdot \exists b \in P_2(w) \cdot (\delta(w,a,b),\delta'(w',a',b')) \in \preceq 
\end{matrix}\right \}
\end{equation*}
      
Clearly, if $(T',T) \in \ssim^S$, then $T' = \Succ(w',a')$ and $T = \Succ(w,a)$ 
for some $(w,w') \in \preceq$ and $a \in P_1(w)$ and $a' \in P'_1(w')$ such 
that for all $b' \in P'_2(w')$ there exists $b \in P_2(w)$, such that 
$(\delta(w,a,b),\delta'(w',a',b')) \in \preceq$. 
Since every $r'$ in $T'$ is such that $r' = \delta'(w',a',b')$ for some 
$b' \in P_2'(w')$, we have that for every $r' \in T'$, there exists 
$b \in P_2(w)$, such that $(\delta(w,a,b),r') \in \preceq$. 
Hence for every $r' \in T'$, there exists $r \in T$ such that 
$(r,r') \in \preceq$.
The other requirements of Definition~\ref{defn:new_alt_sim} are trivially satisfied. 
Hence $\preceq$ is also a successor set simulation.

\item \emph{(Successor set simulation implies alternating simulation).} 
Suppose $\ssim$ is a successor set simulation. 
Hence there exists a companion relation 
$\ssim^S \subseteq \Succ(K') \times \Succ(K)$ satisfying the requirements of 
Definition~\ref{defn:new_alt_sim}. 
We need to prove that $\ssim$ is also an alternating simulation.
From Definition~\ref{defn:new_alt_sim}, for all $(w,w') \in \ssim$, 
for all $a \in P_1(w)$, there exists $a' \in P'_1(w')$ such that 
$(\Succ(w',a'),\Succ(w,a)) \in \ssim^S$. 
Now, for any $b' \in P'_2(w')$, there exists $r' \in \Succ(w',a')$, 
such that $r' = \delta'(w',a',b')$. 
Since, $(\Succ(w',a'),\Succ(w,a)) \in \ssim^S$, and 
$r' \in \Succ(w',a')$, there exists a $r \in \Succ(w,a)$ and hence there exists 
$b \in P_2(w)$ satisfying $r = \delta(w,a,b)$, such that $(r,r') \in \ssim$, 
which is same as $(\delta(w,a,b),\delta'(w',a',b')) \in \ssim$. 
Hence $\ssim$ is also an alternating simulation.
\end{itemize}
This completes the proof.
\hfill\qed
\end{proof}

We will now present our iterative algorithm to compute $\ssim^*$, and we will 
denote by $\ssim^S$ the witness companion relation of $\ssim^*$. 
Our algorithm will use the following graph construction: 
Given an ATS $K$, we will construct the graph $G_K=(V_K,E_K)$ as follows:
(1)~$V_K= W \cup \Succ(K)$, where $W$ is the set of states; and
(2)~$E_K = \set{(w,\Succ(w,a)) \mid w\in W \wedge a \in P_1(w) } \cup \set{(T,r) \mid T \in \Succ(K) \wedge r \in T }$.
The graph $G_K$ can be constructed in time $O(|W|^2\cdot |A_1|)$ using the 
binary tree data structure described earlier.
Our algorithm will use the standard notation of $\pre$ and $\post$: 
given a graph $G=(V,E)$, for a set $U$ of states, $\post(U)=\set{ v \mid \exists u \in U, 
(u,v)\in E}$ is the set of successor states of $U$, and similarly,  
$\pre(U)=\set{v \mid \exists u \in U, (v,u)\in E}$ 
is the set of predecessor states. 
If $U=\set{q}$ is singleton, we will write $\post(q)$ instead of $\post(\set{q})$.
Note that in the graph $G_K$ for the state $T \in \Succ(K)$ we have 
$\post(T)=\set{q \mid q \in T}=T$. 
Given ATSs $K$ and $K'$ our algorithm will work simultaneously on the graphs
$G_{K}$ and $G_{K'}$ using three data structures, namely, 
$\simstr$, $\countstr$ and $\remove$ for the relation $\ssim^*$ (resp. 
$\simstr^S$, $\countstr^S$ and $\remove^S$ for the companion relation 
$\ssim^S$).
The data structures are as follows: 
(1)~Intuitively $\simstr$ will be an overapproximation of $\ssim^*$, and 
will be maintained as a two-dimensional Boolean array so that whenever the $i,j$-th entry 
is false, then we have 
a witness that the $j$-th state $w_j'$ of $K'$ does not simulate the $i$-th state $w_i$ of $K$
(similary we have $\simstr^S$ over $\Succ(K')$ and $\Succ(K)$ for the relation $\ssim^S$).
(2)~The data structure $\countstr$ is two-dimensional array, such that for a 
state $w'\in W'$ and $T \in \Succ(K)$ we have $\countstr(w',T)$ is the number 
of elements in the intersection of the successor states of $w'$ and 
the set of all states that $T$ simulates according to $\simstr^S$; 
and we also have similar array $\countstr^S$ for $T,w'$ elements.
(3)~Finally, the data structure $\remove$ is a list of sets, where for every 
$w'\in W'$ we have $\remove(w')$ is a set such that every element of the set 
belongs to $\Succ(K)$. 
Similarly for every $T \in \Succ(K)$ we have $\remove^S(T)$ is a set of states.
Intuitively the interpretation of remove data structure will be as follows: 
if $T\in \Succ(K)$ belongs to $\remove(w')$, then no element $w$ of $T$ is 
simulated by $w'$. 
Our algorithm will always maintain $\simstr$ (resp. $\simstr^S$) 
as overapproximation of $\ssim^*$ (resp. $\ssim^S$), 
and will iteratively prune them.
Our algorithm is iterative and we denote by $\prevsim$ (resp. 
$\prevsim^S$) the $\simstr$ (resp. $\simstr^S$) of the previous 
iteration.
To give an intuitive idea of the invariants maintained by the 
algorithm (Algorithm~\ref{algo:iterative}) let us denote by 
$\simstr(w)$ the set of $w'$ such that $\simstr(w,w')$ is true, and 
let us denote by $\inv\simstr(w')$ the inverse of 
$\simstr(w')$, i.e., the set of states $w$ such that $(w,w')$-th element 
of $\simstr$ is true (similar notation for $\inv\prevsim(w'),
\inv\simstr^S(T)$ and $\inv\prevsim^S(T)$). 
The algorithm will ensure the following invariants at different steps:
\begin{enumerate}
\item For $w \in W, w'\in W'$ and $T\in \Succ(K), T' \in \Succ(K')$,
  \begin{enumerate}
  \item if $\simstr(w,w')$ is false, then $(w,w')\notin \ssim^*$; 
  \item similarly, if $\simstr^S(T',T)$ is false, then $(T',T) \notin \ssim^S$.
  \end{enumerate}
\item For $w' \in W'$ and $T \in \Succ(K)$,
  \begin{enumerate}
  \item $\countstr(w',T) = |\post(w') \cap \inv\simstr^S(T)|$; and
  \item $\countstr(T,w') = |\post(T) \cap \inv\simstr(w')|=|T \cap \inv\simstr(w')|$
  \end{enumerate}
\item For $w' \in W'$ and $T \in \Succ(K)$,
  \begin{enumerate}
  \item $\remove(w') = \pre(\inv\prevsim(w')) \setminus \pre(\inv\simstr(w'))$
  \item $\remove(T) = \pre(\inv\prevsim^{S}(T)) \setminus \pre(\inv\simstr^{S}(T))$.
  \end{enumerate}
\end{enumerate}

The algorithm has two phases: the initialization phase, where the data structures are 
initialized; and then a while loop. 
The while loop consists of two parts: one is pruning of $\simstr$ and the 
other is the pruning of $\simstr^S$ and both the pruning steps are similar.
The initialization phase initializes the data structures and is described in 
Steps~1, 2, and~3 of Algorithm~\ref{algo:iterative}. 
Then the algorithm calls the two pruning steps in a while loop. 
The condition of the while loop checks whether $\prevsim$ and 
$\simstr$ are the same, and it is done in constant time 
by simply checking whether $\remove$ is empty. 
We now describe one of the pruning procedures and the other is similar.
The pruning step is similar to the pruning step of the algorithm 
of~\cite{HHK95} for simulation on transition systems.
We describe the pruning procedure {\sc PruneSimStrSucc}.
For every state $w' \in W'$ such that the set $\remove(w')$ is non-empty,
we run a for loop.
In the for loop we first obtain the predecessors $T'$ of $w'$ in $G_{K'}$ 
(each predecessor belongs to $\Succ(K')$) and an element $T$ from 
$\remove(w')$. 
If $\simstr^S(T',T)$ is true, then we do the following steps:
(i)~We set $\simstr^S(T',T)$ to false, because we know that there does not 
exist any element $w \in T$ such that $w'$ simulates $w$.
(ii)~Then for all $s'$ that are predecessors of $T'$ in $G_{K'}$ we decrement 
$\countstr(s',T)$, and if the count is zero, then we add $s'$ to the remove 
set of $T$.
Finally we set the remove set of $w'$ to $\emptyset$.
The description of {\sc PruneSimStr} to prune $\simstr$ is similar.


\begin{algorithm*}[t]
\caption{Iterative Algorithm}
\label{algo:iterative}
{
\begin{tabbing}
aaa \= aaa \= aaa \= aaa \= aaa \= aaa \= aaa \= aaa \kill
\> {\bf Input:} $K= (\Sigma,W,\whw,A_1,A_2,P_1,P_2,L,\delta)$, $K'= (\Sigma,W',\whw',A_1',A_2',P_1',P_2',L',\delta')$. \\
\>   {\bf Output:} $\ssim^*$. \\ 

\> 1. {\bf Initialize $\simstr$ and $\simstr^S$:} \\ 
\>\> 1.1. {\bf for all} $w \in W, w'\in W'$ \\
\>\>\>\> $\prevsim(w,w')\gets$ true; \\
\>\>\>\> {\bf if } $L(w)=L'(w')$, then $\simstr(w,w')\gets$ true; \\
\>\>\>\> {\bf else } $\simstr(w,w')\gets$ false; \\

\>\> 1.2. {\bf for all} $T \in \Succ(K)$ and $T'\in \Succ(K')$ \\ 
\>\>\>\> $\prevsim^S(T',T)=\simstr^S(T',T)\gets $ true; \\

\> 2. {\bf Initialize $\countstr$ and $\countstr^S$:} \\ 
\>\> 2.1. {\bf for all} $w'\in W'$ and $T \in \Succ(K)$ \\
\>\>\>\> $\countstr(w',T) \gets |\post(w') \cap \inv\simstr^{S}(T)| = |\post(w')|$; \\
\>\>\>\> $\countstr^S(T,w') \gets |\post(T) \cap \inv\simstr(w')|$; \\

\> 3. {\bf Initialize $\remove$ and $\remove^S$:} \\ 
\>\> 3.1. {\bf for all} $w'\in W'$ \\
\>\>\>\> $\remove(w')\gets \Succ(K)\setminus \pre(\inv\simstr(w'))$; \\
\>\> 3.2. {\bf for all} $T \in \Succ(K)$ \\
\>\>\>\> $\remove^S(T)\gets \emptyset$; \\

\>  {\bf Pruning while loop:} \\
\> 4. {\bf while} $\prevsim \neq \simstr$ \\
\>\> 4.1 $\prevsim \gets \simstr$; \\
\>\> 4.2 $\prevsim^S \gets \simstr^S$; \\
\>\> 4.3 {\bf Procedure} {\sc PruneSimStrSucc}; \\
\>\> 4.4 {\bf Procedure} {\sc PruneSimStr}; \\

\> 5. {\bf return} $\set{(w,w') \in W\times W'\mid \simstr(w,w') \text{ is true}}$;

\end{tabbing}
}
\end{algorithm*}

\begin{algorithm*}[t]
\caption{Procedure PruneSimStrSucc}
\label{algo:prunesucc}
{
\begin{tabbing}
aaa \= aaa \= aaa \= aaa \= aaa \= aaa \= aaa \= aaa \= aa \= aaa \kill

\> 1. {\bf forall } $w' \in W'$ such that $\remove(w') \neq \emptyset$  \\
\>\> 1.1. {\bf forall} $T' \in \pre(w')$ and $T \in \remove(w')$ \\
\>\>\> 1.1.1 {\bf if} ($\simstr^S(T',T)$) \\
\>\>\>\>\>  $\simstr^S(T',T) \gets$ false; \\
\>\>\>\>\> 1.1.1.A. {\bf forall }  ($s' \in \pre(T')$) \\
\>\>\>\>\>\>\>\>  $\countstr(s',T) \gets \countstr(s',T) - 1$; \\
\>\>\>\>\>\>\>\>   {\bf if}  ($\countstr(s',T) = 0$ ) \\
\>\>\>\>\>\>\>\>\>  $\remove^S(T) \gets \remove^S(T) \cup \set{s'}$; \\
\>\> 1.2. $\remove(w') \gets \emptyset$; 
\end{tabbing}
}
\end{algorithm*}

\begin{algorithm*}[t]
\caption{Procedure PruneSimStr}
\label{algo:prune}
{
\begin{tabbing}
aaa \= aaa \= aaa \= aaa \= aaa \= aaa \= aaa \= aaa \= aa \= aaa \kill

\> 1. {\bf forall } $T \in \Succ(K)$ such that $\remove^S(T) \neq \emptyset$  \\
\>\> 1.1. {\bf forall} $w \in \pre(T)$ and $w' \in \remove^S(T)$ \\
\>\>\> 1.1.1 {\bf if} ($\simstr(w,w')$) \\
\>\>\>\>\>  $\simstr(w,w') \gets$ false; \\
\>\>\>\>\> 1.1.1.A. {\bf forall }  ($D \in \pre(w)$) \\
\>\>\>\>\>\>\>\>  $\countstr^S(D,w') \gets \countstr^S(D,w') - 1$; \\
\>\>\>\>\>\>\>\>   {\bf if}  ($\countstr^S(D,w') = 0$ ) \\
\>\>\>\>\>\>\>\>\>  $\remove(w') \gets \remove(w') \cup \set{D}$; \\
\>\> 1.2. $\remove^S(T) \gets \emptyset$; 
\end{tabbing}
}
\end{algorithm*}

\smallskip\noindent{\bf Correctness.} 
Our correctness proof will be in two steps. 
First we will show that invariant~1  (both about $\simstr$ and $\simstr^S$) and 
invariant~2 (both about $\countstr$ and $\countstr^S$) are true
at the beginning of step 4.1.
The invariant~3.(a) (on $\remove$) is true after the procedure call 
{\sc PruneSimStr} (step~4.4) and invariant 3.(b) (on $\remove^S$) is true after 
the procedure call {\sc PruneSimStrSucc} (step~4.3). 
We will then argue that these invariants ensure correctness of the algorithm.

\smallskip\noindent{\em Maintaining invariants.}
We first consider invaraint~1, and focus on invariant~1.(b) (as
the other case is symmetric).
In procedure {\sc PruneSimStrSucc} when we set $\simstr^S(T',T)$ to
false, we need to show that $(T',T) \not\in \ssim^S$.
The argument is as follows: when we set $\simstr^S(T',T)$ to
false, we know that since $T \in \remove(w')$ we have 
$\countstr^S(T,w')=0$ (i.e., $\post(T) \cap \inv\simstr(w')=\emptyset$).
This implies that for every $w \in T$ we have that $w'$ does not simulate 
$w$. Also note that since $\countstr^S$ is never incremented, if it 
reaches zero, it remains zero. 
This proves the correctness of invariant~1.(b) (and similar argument holds 
for invariant~1.(a)).
The correctness for invariant 2.(a) and 2.(b) is as follows: whenever we decrement
$\countstr(s',T)$ we have set $\simstr^S(T',T)$ to false, and  
$T'$ was earlier both in $\post(s')$ as well as in $\inv\simstr^S(T)$, 
and is now removed from $\inv\simstr^S(T)$. 
Hence from the set $\post(s') \cap \inv\simstr^S(T)$ we remove the element $T'$ 
and its cardinality decreases by~1. 
This establishes correctness of invariant~2.(a) (and invariant~2.(b) is similar).
Finally we consider invariant 3.(a): when we add 
$s'$ to $\remove^S(T)$, then we know that $\countstr(s',T)$ was decremented
to zero, which means $T'$ belongs to $\inv\prevsim^S(T)$, but not to 
$\inv\simstr^S(T)$.
Thus $s'$ belongs to $\pre(\inv\prevsim^S(T))$ (since $s'$ belongs to $\pre(T')$),
but not to $\pre(\inv\simstr^S(T))$.
This shows that $s'$ belongs to $\remove^S(T)$, and establishes correctness
of the desired invariant (argument for invariant 3.(b) is similar). 

\smallskip\noindent{\em Invariants to correctness.}
The initialization part ensures that $\simstr$ is an overapproximation of 
$\ssim^*$ and it follows from invariant~1 that the output is an overapproximation of 
$\ssim^*$.
Similarly we also have that $\simstr^S$ in the end is an overapproximation of 
$\ssim^S$. 
To complete the correctness proof, let $\simstr$ and $\simstr^S$ be the 
result when the while loop iterations end. 
We will now show that $\simstr$ and $\simstr^S$ are witness to satisfy successor set 
simulation.
We know that when the algorithm terminates, 
$\remove(w') = \emptyset$ for every $w' \in W'$, and 
$\remove^S(T) = \emptyset$ for every $T \in \Succ(K)$ (this follows 
since $\simstr=\prevsim$). 
To show that $\simstr$ and $\simstr^S$ are witnesses to satisfy successor set simulation, 
we need to show the following two properties:
(i)~If $\simstr(w,w')$ is true, then for every $a \in P_1(w)$, there exists $a' \in P_1'(w')$ 
such that $\simstr^S(\Succ(w',a'),\Succ(w,a))$ is true.
(ii)~If $\simstr^S(T',T)$ is true, then for every $s' \in T'$, there exists 
$s \in T$ such that $\simstr(s,s')$ is true.
The property (i)~holds because for every $a \in P_1(w)$, we have that 
$\countstr(w',T) > 0$, where $T= \Succ(w,a)$, 
(because otherwise, $w'$ would have been inserted in $\remove(T)$, but since $\remove(T)$ is empty, 
consequently $\simstr(w,w')$ must have been made false). 
Hence we have that $\post(w') \cap \inv\simstr^S(T)$ 
is non-empty and hence there exists $T' \in \post(w')$ such that 
$\simstr^S(T',T)$ is true. Similar argument works for (ii).
Thus we have established that $\simstr$ is both an overapproximation 
of $\ssim^*$ and also a witness successor set relation.
Since $\ssim^*$ is the maximum successor set relation, it follows
that Algorithm~\ref{algo:iterative} correctly computes 
$\ssim^*=\preceq_{\altsim}$ ($\ssim^*=\preceq_{\altsim}$ 
by Lemma~\ref{lemm:alt-charac}).

\smallskip\noindent{\em Space complexity.} We now argue that the space complexity 
of the iterative algorithm is superior as compared to the game based algorithm.
We first show that the space taken by Algorithm~\ref{algo:iterative} 
is $O(|W|^2\cdot |A_1| + |W'|^2\cdot |A_1'| + 
|W|\cdot |W'|\cdot |A_1|\cdot |A_1'|)$.
For the iterative algorithm, the space requirements are,
(i)~$\simstr$ and $\simstr^S$ require at most $O(|W|\cdot|W'|)$ and 
$O(|W|\cdot |W'|\cdot |A_1|\cdot |A_1'|)$ space, respectively;
(ii)~$\countstr$ and $\countstr^S$ require at most $O(|W|\cdot |W'|\cdot |A_1|)$ 
space each;
(iii)~$\remove$ and $\remove^S$ maintained as an array of sets require 
at most $O(|W|\cdot |W'|\cdot |A_1|)$, space each.
Also, for the construction of graphs $G_K$ and $G_{K'}$ using the binary tree data structure 
as described earlier, the space required is at most $O(|W|^2\cdot|A_1|)$ and 
$O(|W'|^2\cdot|A_1'|)$, 
respectively.
As compared to the space requirement of the iterative algorithm, the game 
based algorithm requires to store the entire game graph and requires at least 
$O(|W|\cdot |W'|\cdot |A_1|\cdot |A_1'|\cdot |A_2'|)$ space (to store edges 
in $E_3$) as well as space $O(|W|^2\cdot |A_1| + |W'|^2 \cdot |A_1'|)$ for the 
binary tree data structure.
The iterative algorithm can be viewed as an efficient simultaneous pruning 
algorithm that does not explicitly construct the game graph (and thus 
saves at least a factor of $|A_2'|$ in terms of space).
We now show that the iterative algorithm along with being space efficient 
matches the time complexity of the game based algorithm.



\smallskip\noindent{\em Time complexity.}
The data structures for $\simstr$ (also $\simstr^S$) and $\countstr$ (also 
$\countstr^S$) are as described earlier. 
We store $\remove$ and $\remove^S$ as a list of lists (i.e., it is a list of 
sets, and sets are stored as lists). 
It is easy to verify that all the non-loop operations take unit cost, 
and thus for the time complexity, we need to estimate the number of times 
the different loops could run. 
Also the analysis of the initialization steps are straight forward, and
we present the analysis of the loops below:
(1)~The {\bf while} loop (Step~4) of Algorithm \ref{algo:iterative} can run for 
at most $|W|\cdot|W'|$ 
iterations because in every iteration (except the last iteration) at least 
one entry of $\simstr$ changes from true to false (otherwise the iteration stops), 
and $\simstr$ has $|W|\cdot|W'|$-entries.
(2)~The {\bf forall} loop (Step~1) in Algorithm \ref{algo:prunesucc} can overall 
run for at most $|W'|\cdot |W| \cdot |A_1|$ iterations. 
This is because elements of $\remove(w')$ are from $\Succ(K)$ and elements $T$ from 
$\Succ(K)$ are included in $\remove(w')$ at most once (when $\countstr^S(T,w')$ is set 
to zero, and once $\countstr^S(T,w')$ is set to zero, it remains zero).
Thus $\remove(w')$ can be non-empty at most $|\Succ(K)|$ times, and 
hence the loop runs at most $|W| \cdot |A_1|$ times for states $w' \in W'$. 
(3)~The {\bf forall} loop (Step~1.1) in Algorithm~\ref{algo:prunesucc}   
can overall run for at most $|W'|\cdot |A_1'|\cdot |A_2'|\cdot |W|\cdot |A_1|$ 
iterations.
The reasoning is as follows: for every edge $(T',w') \in G_{K'}$ and
$T\in \Succ(K)$ the loop runs at most once (since every $T$ is included in 
$\remove(w')$ at most once).
Hence the number of times the loop runs is at most the number of edges 
in $G_{K'}$ (at most $|W'|\cdot |A_1'|\cdot |A_2'|$) times the number of elements in 
$\Succ(K)$ (at most $|W|\cdot |A_1|$).
Thus overall the number of iterations of Step~1.1 of Algorithm~\ref{algo:prunesucc}
is at most $|W'|\cdot |A_1'|\cdot |W|\cdot |A_1|$.
(4)~The {\bf forall} loop (Step 1.1.1.A) in Algorithm~\ref{algo:prunesucc} 
can overall run for at most $|W'|\cdot |A_1'| \cdot |A_2'|\cdot |W|\cdot|A_1|$ iterations 
because every edge $(s',T')$ in $G_{K'}$ would be iterated at most once for every 
$T \in \Succ(K)$ (as for every $T,T'$ we set $\simstr^S(T,T')$ false at most
once, and the loop gets executed when such an entry is set to false).
The analysis of the following items (5), (6), and (7), are similar to 
(2), (3), and (4), respectively.
(5)~The {\bf forall} loop (Step~1) in Algorithm~\ref{algo:prune} can overall run for 
at most $|W|\cdot |A_1|\cdot |W'|$ iterations, because $\remove^S(T)$ can be 
non-empty at most $|W'|$ times (i.e., the number of different $T$ is 
at most $|\Succ(K)|=|W|\cdot |A_1|$).
(6)~The {\bf forall} loop (Step~1.1) in Algorithm~\ref{algo:prune}  
can overall run for at most $|W|\cdot |A_1| \cdot |A_2|\cdot |W'|$ 
iterations because every edge $(w,T)$ in $G_K$ can be iterated over at most 
once for every $w'$ (the number of edges in $G_K$ is $|W|\cdot |A_1| \cdot |A_2|$
and number of $w'$ is at most $|W'|$). 
(7)~The {\bf forall} loop (Step~1.1.1.A) in Algorithm~\ref{algo:prune} can overall run for 
at most $|W|\cdot |A_1|\cdot |A_2|\cdot |W'|$ iterations because every edge 
$(w,D)$ in $G_K$ would be iterated over at most once for every $w' \in W'$.
Adding the above terms, we get that the total time complexity is 
$O\big(|W|\cdot |W'|\cdot |A_1|\cdot (|A_1'|\cdot |A_2'| + |A_2|)\big)$, 
i.e., the time complexity matches the time complexity of the game reduction 
based algorithm.
We also tabulate our analysis in Table~\ref{tab:complexities}.
We also remark that for transition systems (TSs), the procedure 
{\sc PruneSimStrSucc} coincides with {\sc PruneSimStr} and our algorithm 
simplifies to the algorithm of~\cite{HHK95}, and thus matches the 
complexity of computing simulation for TSs.

\begin{theorem}
Algorithm~\ref{algo:iterative} correctly computes $\preceq_{\altsim}$ 
in time $O\big(|W|\cdot |W'|\cdot |A_1|\cdot (|A_1'|\cdot |A_2'| + |A_2|) 
+ |W|^2\cdot |A_1| + |W'|^2\cdot|A_1'|\big)$. 
\end{theorem}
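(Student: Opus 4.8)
The plan is to split the argument into a \emph{correctness} part and a \emph{running-time} part, both of which lean on the invariant bookkeeping set up above.

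\textbf{Correctness.} First I would verify that the three invariants hold at the stated program points of Algorithm~\ref{algo:iterative}: invariants~1 and~2 at the head of Step~4.1, invariant~3.(b) right after the call to {\sc PruneSimStrSucc}, and invariant~3.(a) right after {\sc PruneSimStr}. The delicate one is invariant~1, which I would prove by induction over the execution. When {\sc PruneSimStrSucc} sets $\simstr^S(T',T)$ to false, the state $w'$ currently being processed lies in $T'$ (as $T'$ is a predecessor of $w'$ in $G_{K'}$) and has $\countstr^S(T,w')=0$, i.e. $\post(T)\cap\inv\simstr(w')=\emptyset$; by the inductive hypothesis no $w\in T$ is in $\ssim^*$-relation with $w'$, so the third clause of Definition~\ref{defn:new_alt_sim} fails for $(T',T)$ already at the witness $r'=w'\in T'$, whence $(T',T)\notin\ssim^S$. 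The symmetric claim for {\sc PruneSimStr} is analogous, using the second clause of Definition~\ref{defn:new_alt_sim}. Invariants~2 and~3 are maintained because each decrement of a $\countstr$ or $\countstr^S$ entry corresponds exactly to deleting one witness (a successor set $T'$, resp.\ a successor state) from the intersection it counts, and an element is appended to a remove list precisely when the associated count first reaches zero.

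Next I would show that when the while loop exits, $(\simstr,\simstr^S)$ actually satisfies Definition~\ref{defn:new_alt_sim}. Termination means $\remove(w')=\emptyset$ for every $w'$ and $\remove^S(T)=\emptyset$ for every $T$. Hence for every true entry $\simstr(w,w')$ and every $a\in P_1(w)$ the count $\countstr(w',\Succ(w,a))$ must be positive --- otherwise $\Succ(w,a)$ would have been inserted into some $\remove$ list and $\simstr(w,w')$ subsequently falsified --- so some $T'\in\post(w')$ has $\simstr^S(T',\Succ(w,a))$ true, which is the second clause; the third clause follows symmetrically from positivity of $\countstr^S$, and the label clause holds by initialization. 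Thus $\simstr$ with companion $\simstr^S$ is a successor set simulation, and by invariant~1 it contains $\ssim^*$; by maximality of $\ssim^*$ the two coincide, and Lemma~\ref{lemm:alt-charac} then gives $\simstr=\ssim^*=\preceq_{\altsim}$.

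\textbf{Running time.} I would account for the cost in three parts. (a)~Constructing $G_K$ and $G_{K'}$ with the binary-tree data structure costs $O(|W|^2\cdot|A_1|+|W'|^2\cdot|A_1'|)$. (b)~The initialization Steps~1--3 are dominated by (a) together with $O(|W|\cdot|W'|\cdot|A_1|)$. (c)~For the pruning while loop, it runs at most $|W|\cdot|W'|$ times since every non-final iteration flips some entry of $\simstr$ from true to false; the real work is bounded by amortization, the key fact being that over the whole execution each $\simstr^S(T',T)$ is set false at most once (a count never increases once it is zero). Consequently the body of Step~1.1.1 of {\sc PruneSimStrSucc} runs at most $|\Succ(K)|\cdot|\Succ(K')|\le|W|\cdot|A_1|\cdot|W'|\cdot|A_1'|$ times overall, Step~1.1.1.A runs at most once per edge $(s',T')$ of $G_{K'}$ for each $T\in\Succ(K)$, i.e. $O(|W'|\cdot|A_1'|\cdot|A_2'|\cdot|W|\cdot|A_1|)$ times in total, and the time spent merely scanning each $\remove(w')$ is charged to the at most $|\Succ(K)|$ insertions into it. The symmetric bounds for {\sc PruneSimStr} contribute $O(|W|\cdot|A_1|\cdot|A_2|\cdot|W'|)$. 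Summing all terms yields $O(|W|\cdot|W'|\cdot|A_1|\cdot(|A_1'|\cdot|A_2'|+|A_2|)+|W|^2\cdot|A_1|+|W'|^2\cdot|A_1'|)$, as claimed.

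\textbf{Main obstacle.} I expect the hard part to be the amortized bookkeeping in part~(c): one must charge each execution of an innermost loop to a resource --- an edge of $G_{K'}$ (resp.\ $G_K$) paired with an element of the opposite successor-set family --- that is consumed at most once over the \emph{entire} run, not once per while-loop iteration. What makes this work is exactly that a $\countstr$ or $\countstr^S$ entry, once it reaches zero, never increases again, so each remove list receives each element at most once; pinning this charging scheme down is what collapses the total cost to the stated bound rather than an extra $|W|\cdot|W'|$ factor.
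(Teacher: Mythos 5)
Your proposal is correct and follows essentially the same route as the paper: establish the three invariants (with the key induction for invariant~1 using that $\countstr^S(T,w')=0$ forces $\post(T)\cap\inv\simstr(w')=\emptyset$ and $w'\in T'$), conclude at termination that $\simstr$ with companion $\simstr^S$ is simultaneously an overapproximation of $\ssim^*$ and a witness successor set simulation, hence equals $\ssim^*=\preceq_{\altsim}$ by maximality and Lemma~\ref{lemm:alt-charac}, and then obtain the time bound by the same amortized charging of inner-loop iterations to edges of $G_{K'}$ (resp.\ $G_K$) paired with elements of $\Succ(K)$ (resp.\ $W'$), each consumed at most once since counts never increase after reaching zero. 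One small slip: in the termination argument it is $w'$ that gets inserted into $\remove^S(\Succ(w,a))$ once $\countstr(w',\Succ(w,a))$ hits zero (not $\Succ(w,a)$ into a remove list), but the ensuing falsification of $\simstr(w,w')$ and the rest of the argument are unaffected.
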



\begin{table}[t]
\begin{center}
\begin{footnotesize}
\begin{tabular}{|c|c|c|}
\hline
Step & Complexity & Justification\\
\hline
\hline
\textbf{while} loop (Step~4 of Algorithm~\ref{algo:iterative})  
& $O(|W|\cdot|W'|)$ & 
\ all (except the last) iteration changes at least one of \  \\
& &  the $|W|\cdot |W'|$-entries of $\simstr$ from true to false \\
\hline
\hline
\textbf{forall} loop (Step~1 of Algorithm~\ref{algo:prunesucc}) 
& $O(|W'|\cdot |W|\cdot |A_1|)$ & 
$\remove(w')$ can be non-empty only\\
& & $|\Succ(K)|$ times, for each $w' \in W'$\\
\hline
\textbf{forall} loop  (Step~1.1 of Algorithm~\ref{algo:prunesucc})
&\ $O(|\Succ(K)|\cdot |W'|\cdot |A_1'| \cdot |A_2'|)$ \   
& every edge in $G_{K'}$ can be iterated at most once \\
& & for each $T$ in $\Succ(K)$, and number of edges  \\
& & in $G_{K'}$ is 
$|W'|\cdot |A_1'| \cdot |A_2'|$ \\
\hline
\textbf{forall} loop (Step~1.1.1.A of Algorithm~\ref{algo:prunesucc})
& $O(|\Succ(K)|\cdot |W'|\cdot |A_1'| \cdot |A_2'|)$ 
& every edge in $G_{K'}$ can be iterated at most once \\
& & for each $T$ in $\Succ(K)$, and number of edges  \\
& & in $G_{K'}$ is 
$|W'|\cdot |A_1'| \cdot |A_2'|$ \\
\hline
\hline
\textbf{forall} loop (Step~1 of Algorithm~\ref{algo:prune}) 
& $O(|W'|\cdot |W|\cdot |A_1|)$ & 
$\remove^S(T)$ can be non-empty only\\
& & $|W'|$ times, for each $T \in \Succ(K)$\\
\hline
\textbf{forall} loop  (Step~1.1 of Algorithm~\ref{algo:prune})  
& $O(|W'|\cdot |W|\cdot |A_1| \cdot |A_2|)$ 
& every edge in $G_{K}$ can be iterated at most once \\
& & for each $w'$ in $W'$, and number of edges  \\
& & in $G_{K}$ is $|W|\cdot |A_1| \cdot |A_2|$ \\
\hline
\ \textbf{forall} loop (Step~1.1.1.A of Algorithm~\ref{algo:prune}) \ 
& $O(|W'|\cdot |W|\cdot |A_1| \cdot |A_2|)$ 
& every edge in $G_{K}$ can be iterated at most once \\
& & for each $w'$ in $W'$, and number of edges  \\
& & in $G_{K}$ is $|W|\cdot |A_1| \cdot |A_2|$ \\
\hline
\hline
\end{tabular}
\end{footnotesize}
\caption{Loop-wise complexity}\label{tab:complexities}
\end{center}
\end{table}

\section{Conclusion}
 In this work we presented faster algorithms for alternating simulation 
and alternating fair simulation which are core algorithmic problems in 
analysis of composite and open reactive systems, as well as state space 
reduction for graph games (that has deep connection with automata theory and
logic).
Moreover, our algorithms are obtained as efficient reductions to graph games 
with reachability and parity objectives with three priorities, and 
efficient implementations exist for all these problems (for example, 
see~\cite{Lange09} for implementation of games with reachability and 
parity objectives, and~\cite{dAF07} for specialized implementation of games 
with parity objectives with three priorities).

\clearpage
\section*{Technical Details Appendix}

\section{Fair Alternating Simulation}

We now present the reduction to parity games with three priorities for the 
special case of fair simulation.
Given the fair TSs $\mathcal{K}=(K,F)$ and $\mathcal{K}'=(K',F')$, 
we construct the game graph $G=((V,E),(V_1,V_2))$ is as follows:
\begin{itemize} \renewcommand{\labelitemi}{$-$}
  \item \emph{Player~1 vertices:} $V_1 = \set{\structure{w,w'} \mid w \in W, w' \in  W', L(w) = L'(w') } \cup \set{\frownie}$.
  \item \emph{Player 2 vertices:} $V_2 = (W \times W' \times \set{\$}) $ 
  \item \emph{Edges:} The edge set $E$ is as follows:
    \begin{eqnarray*}
      E &=& \set{(\structure{w_1,w_2},\structure{w'_1,w_2,\$}) \mid \structure{w_1,w_2}\in V_1, (w_1,w_1') \in R }\\
      & & \cup \set{(\structure{w_1',w_2,\$},\structure{w_1',w_2'}) \mid \structure{w_1',w_2,\$} \in V_2, (w_2,w_2') \in R', \structure{w_1',w_2'} \in V_1 }\\
      & & \cup \set{(\structure{w_1',w_2,\$},\frownie)              \mid \structure{w_1',w_2,\$} \in V_2, \forall w_2' \text{ if } (w_2,w_2')\in R', \text{ then } \structure{w_1',w_2'} \not\in V_1  }\\
      & & \cup \set{(\frownie,\frownie)}
    \end{eqnarray*}
  \end{itemize}
The three-priority parity objective $\Phi^*$ for player~2 with the priority function 
$p$ is specified as follows: 
for vertices $v \in (W \times F')\cap V_1$ we have $p(v)=0$; 
for vertices $v \in ((F \times W' \setminus W \times F')\cap V_1) \cup \set{\frownie}$ we have $p(v)=1$;
and all other vertices have priority~2.
Also without loss of generality we assume that for every $w\in W$ there exists
a fair run from $w$.
The specialization of Proposition~\ref{prop:alt_fair_sim_correctness} gives 
us the following proposition.

\begin{proposition}~\label{prop:fair_sim_correctness}
Let $\Win_2= \set{(w_1,w_2) \mid  \structure{w_1,w_2} \in V_1, \structure{w_1,w_2} \in W_2(\Phi^*), \text{i.e., there is a winning state for player 2}}$.
Then we have $\Win_2= \preceq_{\fair}$.
\end{proposition}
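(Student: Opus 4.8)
The plan is to derive the statement as a specialization of Proposition~\ref{prop:alt_fair_sim_correctness}. First I would recall that a TS is an ATS with $A_2 = A_2' = \set{\bot}$, so that for every state $w$ and every action $a$ the successor set $\Succ(w,a) = \set{\delta(w,a,\bot)}$ is a singleton. Identifying each singleton $\set{r}$ with the state $r$, the general game graph of Section~\ref{sec:fair_alt} instantiated to this case has, between a player-$1$ vertex $\structure{w,w'}$ and the next player-$1$ vertex $\structure{w_1,w_1'}$, the chain $\structure{w,w'} \to \structure{\set{w_1},w',\#} \to \structure{\set{w_1},\set{w_1'}} \to \structure{\set{w_1},w_1',\$} \to \structure{w_1,w_1'}$, where player~$1$ chooses $w_1$ at the first step (an $E_1$-edge for each $R$-successor of $w$), player~$2$ chooses $w_1'$ at the second step (an $E_2$-edge for each $R'$-successor of $w'$), and the remaining two steps — the $E_3$-edge out of $\structure{\set{w_1},\set{w_1'}}$ and the $E_4^1$ (or $E_4^2$) edge out of $\structure{\set{w_1},w_1',\$}$ — are \emph{forced}, since $\set{w_1'}$ and $\set{w_1}$ each have exactly one element (so the second component equals $\frownie$ precisely when $L(w_1) \ne L'(w_1')$). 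Contracting these two forced moves turns the instantiated game into exactly the game graph $G$ of the appendix, up to the cosmetic renaming of the intermediate marker, and the priority function is literally the same on both: the contracted intermediate vertex types lie neither in $W \times F'$ nor in $F \times W'$ nor equal $\frownie$, so they all receive priority~$2$, and the surviving vertices $\structure{w,w'}$ and $\frownie$ are assigned priorities by the same rule as in the appendix.

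Next I would check that the contraction preserves winning regions. Each deleted edge is the unique outgoing edge of an intermediate vertex, so plays of the two games are in bijection after deleting the intermediate vertices, and the owner of each choice is preserved by the bijection. Moreover a play satisfies $\Phi^*$ in one game iff the corresponding play does in the other: along any play of the instantiated game each maximal block of consecutive intermediate vertices has length at most three, so an infinite play visits a vertex of $G \cup \set{\frownie}$ infinitely often, and since the intermediate vertices all carry the maximum priority~$2$, the minimum priority visited infinitely often is unchanged by inserting or deleting them; and if $\frownie$ is reached, both games declare player~$1$ the winner. Hence $W_2(\Phi^*)$ restricted to the vertices $\structure{w,w'}$ is the same set in the appendix game and in the instantiated general game, i.e., the set $\Win_2$ of the present proposition coincides with the set $\Win_2$ of Proposition~\ref{prop:alt_fair_sim_correctness} applied to $\mathcal{K},\mathcal{K}'$ viewed as fair ATSs (the standing assumption that every state of $K$ admits a fair run is exactly the preprocessing hypothesis of that construction, as restated in the appendix). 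That proposition then gives $\Win_2 = \preceq_{\fairalt}^{\weak} = \preceq_{\fairalt}^{\strong}$.

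It remains to observe that for TSs the weak and strong fair alternating simulation relations both collapse to ordinary fair simulation. With $A_2 = A_2' = \set{\bot}$ the Agent-$2$ strategies $\xi$ and $\xi'$ have singleton ranges and are trivial; $\tau'$ amounts precisely to a duplicator strategy of the form $(W \times W')^+ \times W \to W'$ choosing the next state of $\mathcal{K}'$ from the history and the next state of $\mathcal{K}$; and, for fixed $\tau'$, the runs $\rho(w,w',\tau,\tau',\xi,\xi')$ in $\mathcal{K}$ obtained as $\tau$ ranges over all strategies are exactly the $w$-runs of $\mathcal{K}$. Under these identifications condition~$2$ of Definition~\ref{defn:weak_alt_fair_sim} and condition~$2$ of Definition~\ref{defn:strong_alt_fair_sim} both read verbatim as condition~$2$ of Definition~\ref{defn:fair_sim}, so $\preceq_{\fairalt}^{\weak} = \preceq_{\fairalt}^{\strong} = \preceq_{\fair}$, which together with the previous step yields $\Win_2 = \preceq_{\fair}$. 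The only real work is in the first two steps — matching the two game graphs and verifying that contracting the forced moves changes neither the winning region nor the value of the parity condition; an alternative route that avoids this bookkeeping is to replay the proof of Proposition~\ref{prop:alt_fair_sim_correctness} directly on the appendix game, using the evident specialization of Lemma~\ref{lemma:alt_fair_correspondence}, where in the inclusion $\preceq_{\fair} \subseteq \Win_2$ one again invokes the assumption that every state of $K$ has a fair run in order to extend a finite label-mismatching play into a fair $K$-run witnessing non-simulation.
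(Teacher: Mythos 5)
Your proof is correct and takes the same route as the paper, which states this proposition with no argument beyond the remark that it is ``the specialization of Proposition~\ref{prop:alt_fair_sim_correctness}''; you are supplying the instantiation, contraction, and collapse-of-definitions details that the paper leaves implicit. One small inaccuracy: the plays of the two games are not literally in bijection, because in the instantiated general game player~2 at $\structure{\set{w_1},w',\#}$ may pick a successor $w_1'$ of $w'$ with $L(w_1)\neq L'(w_1')$ even when a label-matching successor exists (the play is then forced into $\frownie$), whereas the appendix game only offers the edge to $\frownie$ when \emph{no} matching successor exists; since these extra player-2 options all lead to the absorbing priority-1 vertex $\frownie$ and are therefore losing for player~2, the two winning regions --- which is all your argument actually uses --- still coincide, so the conclusion stands.
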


\begin{lemma}\label{lem_fair_size}
For the game graph constructed for fair simulation we have 
$|V_1| + |V_2| \leq  O(|W|\cdot |W'|)$; and 
$|E| \leq O(|W|\cdot |R'| +  |W'|\cdot |R|)$.
\end{lemma}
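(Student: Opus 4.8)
The plan is to mirror the counting argument used for Lemma~\ref{lem_fair_alt_size}: first bound the number of vertices, and then bound the number of edges by splitting $E$ into the four syntactic pieces appearing in its definition and over-counting each piece separately.

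For the vertices I would note that $V_1 \subseteq (W \times W') \cup \set{\frownie}$, so $|V_1| \le |W|\cdot|W'| + 1$, while $V_2 = W \times W' \times \set{\$}$ has exactly $|W|\cdot|W'|$ elements; hence $|V_1| + |V_2| \le 2|W|\cdot|W'| + 1 = O(|W|\cdot|W'|)$.

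For the edges, consider the four sets in the union defining $E$ in order. An edge in the first set is determined by a state $w_2 \in W'$ together with a transition $(w_1,w_1') \in R$, so this set has size at most $|W'|\cdot|R|$. Symmetrically, an edge in the second set is determined by a state $w_1' \in W$ together with a transition $(w_2,w_2') \in R'$, so it has size at most $|W|\cdot|R'|$. The third set contains at most one edge (to $\frownie$) out of each vertex of $V_2$, hence has size at most $|V_2| = |W|\cdot|W'|$; and since the standing totality assumption on the transition relations gives $|R'| \ge |W'|$, this is at most $|W|\cdot|R'|$. The fourth set is the single self-loop $\set{(\frownie,\frownie)}$. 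Summing, $|E| \le 2|W|\cdot|R'| + |W'|\cdot|R| + 1 = O(|W|\cdot|R'| + |W'|\cdot|R|)$.

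There is no genuine obstacle here; this is pure combinatorial book-keeping. The only step needing a moment's care is folding the $|W|\cdot|W'|$ contribution of the $\frownie$-edges (and the single extra sink vertex) into the stated asymptotic bounds, which is exactly where the hypothesis that every state has an outgoing transition — so that $|R| \ge |W|$ and $|R'| \ge |W'|$ — is invoked.
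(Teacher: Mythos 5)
Your proof is correct and follows essentially the same counting argument as the paper: bound $|V_1|$ and $|V_2|$ by $|W|\cdot|W'|$ plus the sink, and bound each piece of $E$ by summing over transitions of one system paired with states of the other, which yields the same $|W'|\cdot|R| + |W|\cdot|R'|$ terms the paper obtains via degree sums. Your explicit appeal to the totality assumption ($|R|\ge|W|$, $|R'|\ge|W'|$) to absorb the $\frownie$-edges is a point the paper leaves implicit, but the argument is the same.
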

\begin{proof}
We have  $|V_1| \le |V_2| +1 = |W|\cdot|W'| + 1= O(|W|\cdot |W'|)$.
We have
\[
|E| \le 1 + 2\cdot |W|\cdot |W'| + \sum_{w' \in W'} \sum_{\begin{subarray}{c} w \in W \\ L(w) = L'(w') \end{subarray}} \deg(w) + \sum_{w \in W} \sum_{w' \in W'} \deg(w')
  \le 1 + 2\cdot |W|\cdot |W'| + |W'|\cdot|R| + |W|\cdot|R'|,
\]
where $\deg(w)$ (resp. $\deg(w')$) denotes the number of outedges (or out-degree) of $w$ (resp. $w'$). 
The result follows.
\hfill\qed
\end{proof}

The reduction and the results to solve parity games with three priorities establish that 
$\preceq_{\fair}$ can be computed in time 
$O(|W|\cdot |W'|\cdot (|W|\cdot |R'| +  |W'|\cdot |R|))$.
This completes the last item of Theorem~\ref{thrm:alt_fair}.

\section{Alternating Simulation}

\subsection{Improved Algorithm Through Games}
In this section we consider the specialization of the 
alternating simulation algorithm for TS. 
Since we have already established in Section~\ref{sec:alt-sim-1} that 
the game graph construction complexity is linear in the size 
of the game graph, we only need to estimate the size of the vertex set
and the edge set for TS.

\begin{lemma}\label{lemm:alt-sim-ts}
For the game graph constructed for alternating simulation for TS, we have 
$|V_1| + |V_2| \leq O(|W|\cdot |W'|\cdot |A_1|\cdot |A_1'|)$ and 
$|E| \leq O(|W|\cdot |W'|\cdot (|A_1| + |A_1'|))$.
\end{lemma}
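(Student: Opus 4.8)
The plan is to read the claimed bounds off the general ATS game-graph construction of Section~\ref{sec:alt-sim-1}, using the fact that a TS is exactly an ATS in which $A_2$ and $A_2'$ are singletons, say $A_2 = A_2' = \set{\bot}$. The structural consequence to isolate first is that in this situation every successor set is a singleton: for a state $w$ and an action $a \in P_1(w)$ we have $\Succ(w,a) = \set{\delta(w,a,\bot)}$, so $|T| = 1$ for every $T \in \Succ(K)$ and $|T'| = 1$ for every $T' \in \Succ(K')$. Since distinct singletons of states correspond to distinct states, this also gives $|\Succ(K)| \le |W|$ and $|\Succ(K')| \le |W'|$, a sharpening of the generic bound $|\Succ(K)| \le |W| \cdot |A_1|$ used in Lemma~\ref{lemm:alt-sim}; this sharpening is what produces the improved edge count.

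For the vertex bound, the TS game graph is a literal instance of the ATS game graph, so $|V_1| + |V_2| \le O(|W| \cdot |W'| \cdot |A_1| \cdot |A_1'|)$ is immediate from Lemma~\ref{lemm:alt-sim}. (In fact, substituting $|\Succ(K)| \le |W|$ and $|\Succ(K')| \le |W'|$ one even gets $|V_1| = |W \times W'| + |\Succ(K) \times \Succ(K')| \le 2\cdot|W|\cdot|W'|$ and $|V_2| = 2\cdot|\Succ(K)|\cdot|W'| \le 2\cdot|W|\cdot|W'|$, comfortably inside the stated bound.)

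The substantive part is the edge count, obtained by re-running the four sums in the proof of Lemma~\ref{lemm:alt-sim} with the two facts above. Concretely, $|E_1| \le \sum_{w' \in W'}\sum_{w \in W} |P_1(w)| \le |W|\cdot|W'|\cdot|A_1|$; $|E_2| \le \sum_{T \in \Succ(K)}\sum_{w' \in W'} |P_1'(w')| \le |\Succ(K)|\cdot|W'|\cdot|A_1'| \le |W|\cdot|W'|\cdot|A_1'|$ using $|\Succ(K)| \le |W|$; $|E_3| = \sum_{T \in \Succ(K)}\sum_{T' \in \Succ(K')} |T'| = |\Succ(K)|\cdot|\Succ(K')| \le |W|\cdot|W'|$ using $|T'| = 1$; and $|E_4| = \sum_{r' \in W'}\sum_{T \in \Succ(K)} |T| = |W'|\cdot|\Succ(K)| \le |W|\cdot|W'|$ using $|T| = 1$. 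Adding these gives $|E| = O(|W|\cdot|W'|\cdot(|A_1|+|A_1'|))$, as required. I would also note that the same computation, with $\sum_w |P_1(w)| = |R|$ and $\sum_{w'} |P_1'(w')| = |R'|$ in place of $|W|\cdot|A_1|$ and $|W'|\cdot|A_1'|$, yields the sharper bound $O(|W|\cdot|R'| + |W'|\cdot|R|)$ quoted in the main text (and the vertex bound is also $O(|W|\cdot|R'| + |W'|\cdot|R|)$, since every state has an outgoing edge and hence $|R'| \ge |W'|$).

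There is no real obstacle here: the only point that needs care is the observation $|\Succ(K)| \le |W|$ together with $|T| = 1$, without which $|E_2|$ would retain a spurious factor $|A_1|$ and $|E_3|, |E_4|$ would retain the factors $|A_2|, |A_2'|$ present in Lemma~\ref{lemm:alt-sim}. Everything else is exactly the arithmetic already carried out there, specialized.
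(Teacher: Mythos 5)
Your proof is correct and follows essentially the same route as the paper's: keep the generic vertex bound from Lemma~\ref{lemm:alt-sim}, observe that $|A_2|=|A_2'|=1$ forces every successor set to be a singleton so that $|\Succ(K)|\le|W|$ and $|T|=|T'|=1$, and then redo the four edge sums with these sharpened quantities. The extra remark relating the result to $|W|\cdot|R'|+|W'|\cdot|R|$ matches the paper's own concluding sentence after the lemma.
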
 
\begin{proof}
Note that the size of the vertex set is bounded by the same quantity as for 
the general case for ATS, and thus the vertex size bound is trivial.
We now consider the case for edges.
First observe that since $|A_2|=1$, it follows that 
$\Succ(K) \leq |W|$ as every $\Succ((w,a))$ is singleton (i.e., a state), and 
hence $\Succ(K)$ has at most $|W|$ elements and each element is a singleton 
state.
Similarly we have $\Succ(K')\leq |W'|$. 
We have $|E| = |E_1| + |E_2| + |E_3| + |E_4|$, and we obtain bounds for
them below:
\[
|E_1| = \sum_{w' \in W'} \sum_{w \in W} |P_1(w)| \le |W'|\cdot |W|\cdot |A_1|
\]
\[
|E_2| = \sum_{T \in \Succ(K)} \sum_{w' \in W'} |P'_1(w')| \le 
|\Succ(K)|\cdot |W'|\cdot |A_1'|
\le |W|\cdot |W'|\cdot |A_1'|
\]
\[
|E_3| = \sum_{T \in \Succ(K)} \sum_{T' \in \Succ(K')} |T'|
\le |\Succ(K)|\cdot |\Succ(K')|\le |W|\cdot |W'|
\]
\[
|E_4| = \sum_{r' \in W'} \sum_{T \in \Succ(K)} |T|
\le |W'|\cdot |\Succ(K)| \le |W'|\cdot |W|
\]
where in the bound for $E_3$ we used $|T'| \le |A_2|=1$ and in the bound for 
$E_4$ we used $|T| \le |A_2|=1$.
It follows that 
$|E| \leq O(|W|\cdot |W'|\cdot (|A_1| + |A_1'|))$.
and the desired result follows.
\hfill\qed
\end{proof}

Since $|R|=|W|\cdot |A_1|$ and $|R'|=|W'|\cdot |A_1'|$, we obtain the 
last result of Theorem~\ref{thrm:alt-sim-1}.

\end{document}